%
%
%
%
\documentclass[
  paper=23.9cm:16.9cm,
  pagesize,
  twoside,
  12pt,
  chapterprefix,
  headsepline=on,
  footinclude=off,
  DIV=18,
  BCOR=7mm,
  bibliography=totoc,
  numbers=noenddot,
  open=right,
]{scrreprt}
\usepackage[%
  left=20mm,
  right=20mm,
  top=25mm,
  bottom=25mm,
]{geometry}



\usepackage{wrapfig} 
\usepackage{tikz} 
\usepackage{lmodern}
\usepackage[T1]{fontenc}
\usepackage[utf8]{inputenc}
\usepackage[english]{babel}	
\usepackage[babel,english=american]{csquotes}
\usepackage{scrlayer-scrpage}
\usepackage{scrhack}
\usepackage[hang,small]{caption} 
\usepackage[]{datetime}
\usepackage{amsthm,mathtools}
\usepackage[toc,page]{appendix}
\usepackage[Sonny]{fncychap}
\usepackage{bm}
\usepackage{array}
\usepackage{longtable}
\usepackage{booktabs}	
\usepackage{color}	
\usepackage{graphicx}
\usepackage{xspace} 
\usepackage{letltxmacro}
\usepackage{etoolbox}

\definecolor{LinkColor}{rgb}{0,0,0.5}

\usepackage[natbib=true,
			style=ieee,
            mincitenames=1,
            maxcitenames=2,
            url=false,
            eprint=false,
            doi=false]{biblatex}
\usepackage{silence}\WarningFilter{biblatex}{File 'english-ieee.lbx'} 

\defbibheading{secbib}[\bibname]{%
  \section*{#1}%
  \addcontentsline{toc}{section}{\bibname}
  \markboth{#1}{#1}
  }

\usepackage{TemplateFiles/AddMyPapers}


\usepackage{graphicx,mathtools,amsmath,amssymb,amsfonts, bm, upgreek,acronym,enumerate,setspace,calc,epsf,tabularx,color,geometry,xspace}%

\usepackage{psfrag}

\usepackage{algorithm}
\usepackage[noend]{algpseudocode}
\usepackage[export]{adjustbox}


\usepackage{glossaries}
\newglossaryentry{5g}
{
  name={5G},
  description={fifth generation},
  first={\glsentrydesc{5g} (\glsentrytext{5g})}
}

\newglossaryentry{2g}
{
  name={2G},
  description={second generation},
  first={\glsentrydesc{2g} (\glsentrytext{2g})}
}

\newglossaryentry{3g}
{
  name={3G},
  description={third generation},
  first={\glsentrydesc{3g} (\glsentrytext{3g})}
}

\newglossaryentry{4g}
{
  name={4G},
  description={fourth generation},
  first={\glsentrydesc{4g} (\glsentrytext{4g})}
}

\newglossaryentry{lte}
{
  name={LTE},
  description={Long-Term Evolution},
  first={\glsentrydesc{lte} (\glsentrytext{lte})}
}

\newglossaryentry{iot}
{
  name={IoT},
  description={Internet of Things},
  first={\glsentrydesc{iot} (\glsentrytext{iot})}
}

\newglossaryentry{3gpp}
{
  name={3GPP},
  description={3rd generation partnership project},
  first={\glsentrydesc{3gpp} (\glsentrytext{3gpp})}
}

\newglossaryentry{csi}
{
  name={CSI},
  description={channel state information},
  first={\glsentrydesc{csi} (\glsentrytext{csi})}
}

\newglossaryentry{csit}
{
  name={CSIT},
  description={channel state information at the transmitter},
  first={\glsentrydesc{csit} (\glsentrytext{csit})}
}

\newglossaryentry{nr}
{
  name={NR},
  description={New Radio},
  first={\glsentrydesc{nr} (\glsentrytext{nr})}
}

\newglossaryentry{imt-2020}
{
  name={IMT-2020},
  description={International Mobile Telecommunications-2020},
  first={\glsentrydesc{imt-2020} (\glsentrytext{imt-2020})}
}

\newglossaryentry{itu}
{
  name={ITU},
  description={International Telecommunication Union},
  first={\glsentrydesc{itu} (\glsentrytext{itu})}
}

\newglossaryentry{emb}
{
  name={EMBB},
  description={Enhanced mobile broadband},
  first={\glsentrydesc{emb} (\glsentrytext{emb})}
}

\newglossaryentry{mtc}
{
  name={MTC},
  description={Massive machine-type communications},
  first={\glsentrydesc{mtc} (\glsentrytext{mtc})}
}

\newglossaryentry{urllc}
{
  name={URLLC},
  description={Ultra-reliable low-latency communications},
  first={\glsentrydesc{urllc} (\glsentrytext{urllc})}
}

\newglossaryentry{bs}
{
  name={BS},
  description={base station},
  first={\glsentrydesc{bs} (\glsentrytext{bs})}
}

\newglossaryentry{v2x}
{
  name={V2X},
  description={vehicle-to-everything},
  first={\glsentrydesc{v2x} (\glsentrytext{v2x})}
}

\newglossaryentry{pa}
{
  name={PA},
  description={predictor antenna},
  first={\glsentrydesc{pa} (\glsentrytext{pa})}
}

\newglossaryentry{pas}
{
  name={PAS},
  description={predictor antenna system},
  first={\glsentrydesc{pas} (\glsentrytext{pas})}
}

\newglossaryentry{ra}
{
  name={RA},
  description={receive antenna},
  first={\glsentrydesc{ra} (\glsentrytext{ra})}
}

\newglossaryentry{mrn}
{
  name={MRN},
  description={moving relay node},
  first={\glsentrydesc{mrn} (\glsentrytext{mrn})}
}



\newglossaryentry{nmse}
{
  name={NMSE},
  description={normalized mean squared error},
  first={\glsentrydesc{nmse} (\glsentrytext{nmse})}
}
\newglossaryentry{rf}
{
  name={RF},
  description={radio-frequency},
  first={\glsentrydesc{rf} (\glsentrytext{rf})}
}

\newglossaryentry{iab}
{
  name={IAB},
  description={integrated access and backhaul},
  first={\glsentrydesc{iab} (\glsentrytext{iab})}
}

\newglossaryentry{gps}
{
  name={GPS},
  description={global positioning system},
  first={\glsentrydesc{gps} (\glsentrytext{gps})}
}

\newglossaryentry{ack}
{
  name={ACK},
  description={acknowledgment},
  first={\glsentrydesc{ack} (\glsentrytext{ack})}
}

\newglossaryentry{nack}
{
  name={NACK},
  description={negative acknowledgment},
  first={\glsentrydesc{nack} (\glsentrytext{nack})}
}

\newglossaryentry{mimo}
{
  name={MIMO},
  description={multiple-input multiple-output},
  first={\glsentrydesc{mimo} (\glsentrytext{mimo})}
}
\newglossaryentry{miso}
{
  name={MISO},
  description={multiple-input single-output},
  first={\glsentrydesc{miso} (\glsentrytext{miso})}
}

\newglossaryentry{npcu}
{
  name={npcu},
  description={nat per channel use},
  first={\glsentrydesc{npcu} (\glsentrytext{npcu})}
}

\newglossaryentry{siso}
{
  name={SISO},
  description={single-input single-output},
  first={\glsentrydesc{siso} (\glsentrytext{siso})}
}

\newglossaryentry{comp}
{
  name={CoMP},
  description={coordinated multipoint},
  first={\glsentrydesc{comp} (\glsentrytext{comp})}
}

\newglossaryentry{jt}
{
  name={JT},
  description={joint transmission},
  first={\glsentrydesc{jt} (\glsentrytext{jt})}
}

\newglossaryentry{ul}
{
  name={UL},
  description={uplink},
  first={\glsentrydesc{ul} (\glsentrytext{ul})}
}

\newglossaryentry{dl}
{
  name={DL},
  description={downlink},
  first={\glsentrydesc{dl} (\glsentrytext{dl})}
}

\newglossaryentry{tdd}
{
  name={TDD},
  description={time division duplex},
  first={\glsentrydesc{tdd} (\glsentrytext{tdd})}
}

\newglossaryentry{fdd}
{
  name={FDD},
  description={frequency division duplex},
  first={\glsentrydesc{fdd} (\glsentrytext{fdd})}
}

\newglossaryentry{bsc}
{
  name={BSC},
  description={binary-symmetric channel},
  first={\glsentrydesc{bsc} (\glsentrytext{bsc})}
}

\newglossaryentry{wp1}
{
  name={w.p.1.},
  description={with probability one},
  first={\glsentrydesc{wp1} (\glsentrytext{wp1})}
}

\newglossaryentry{arq}
{
  name={ARQ},
  description={automatic repeat request},
  first={\glsentrydesc{arq} (\glsentrytext{arq})}
}

\newglossaryentry{harq}
{
  name={HARQ},
  description={hybrid automatic repeat request},
  first={\glsentrydesc{harq} (\glsentrytext{harq})}
}

\newglossaryentry{dmc}
{
  name={DMC},
  description={discrete memoryless channel},
  first={\glsentrydesc{dmc} (\glsentrytext{dmc})}
}

\newglossaryentry{bpsk}
{
  name={BPSK},
  description={binary phase-shift keying},
  first={\glsentrydesc{bpsk} (\glsentrytext{bpsk})}
}

\newglossaryentry{wss-us}
{
  name={WSS-US},
  description={wide-sense stationary with uncorrelated scatterers},
  first={\glsentrydesc{wss-us} (\glsentrytext{wss-us})}
}

\newglossaryentry{awgn}
{
  name={AWGN},
  description={additive white Gaussian channel},
  first={\glsentrydesc{awgn} (\glsentrytext{awgn})}
}

\newglossaryentry{biawgn}
{
  name={BI-AWGN},
  description={binary additive white Gaussian channel},
  first={\glsentrydesc{biawgn} (\glsentrytext{biawgn})}
}

\newglossaryentry{LED}
{
  name={LED},
  description={light emitting diode},
  first={\glsentrydesc{LED} (\glsentrytext{LED})},
  plural={LEDs},
  descriptionplural={light emitting diodes},
  firstplural={\glsentrydescplural{LED} (\glsentryplural{LED})}
}

\newglossaryentry{rtd}
{
  name={RTD},
  description={repetition time diversity},
  first={\glsentrydesc{rtd} (\glsentrytext{rtd})}
}

\newglossaryentry{inr}
{
  name={INR},
  description={incremental redundancy},
  first={\glsentrydesc{inr} (\glsentrytext{inr})}
}

\newglossaryentry{ofdm}
{
  name={OFDM},
  description={orthogonal frequency-division multiplexing},
  first={\glsentrydesc{ofdm} (\glsentrytext{ofdm})}
}
\newglossaryentry{ue}
{
  name={UE},
  description={user equipment},
  first={\glsentrydesc{ue} (\glsentrytext{ue})},
  plural={UE's},
  descriptionplural={user equipments},
  firstplural={\glsentrydescplural{ue} (\glsentryplural{ue})}
}

\newglossaryentry{rb}
{
  name={RB},
  description={resource block},
  first={\glsentrydesc{rb} (\glsentrytext{rb})},
  plural={RBs},
  descriptionplural={resource blocks},
  firstplural={\glsentrydescplural{rb} (\glsentryplural{rb})}
}
\newglossaryentry{cu}
{
  name={CU},
  description={channel use},
  first={\glsentrydesc{cu} (\glsentrytext{cu})},
  plural={CUs},
  descriptionplural={channel uses},
  firstplural={\glsentrydescplural{cu} (\glsentryplural{cu})}
}
\newglossaryentry{tti}
{
  name={TTI},
  description={transmission time interval},
  first={\glsentrydesc{tti} (\glsentrytext{tti})},
  plural={TTI's},
  descriptionplural={transmission time intervals},
  firstplural={\glsentrydescplural{tti} (\glsentryplural{tti})}
}
\newglossaryentry{iid}
{
  name={i.i.d.},
  description={identical and independently distributed},
  first={\glsentrydesc{iid} (\glsentrytext{iid})}
}
\newglossaryentry{psd}
{
  name={PSD},
  description={power spectral density},
  first={\glsentrydesc{psd} (\glsentrytext{psd})}
}

\newglossaryentry{bler}
{
  name={BLER},
  description={block error probability},
  first={\glsentrydesc{bler} (\glsentrytext{bler})}
}

\newglossaryentry{dvbt}
{
  name={DVB-T},
  description={terrestrial digital video broadcasting},
  first={\glsentrydesc{dvbt} (\glsentrytext{dvbt})}
}
\newglossaryentry{pat}
{
  name={PAT},
  description={pilot-assisted transmission},
  first={\glsentrydesc{pat} (\glsentrytext{pat})}
}
\newglossaryentry{ml}
{
  name={ML},
  description={maximum likelihood},
  first={\glsentrydesc{ml} (\glsentrytext{ml})}
}
\newglossaryentry{dt}
{
  name={DT},
  description={dependence-testing},
  first={\glsentrydesc{dt} (\glsentrytext{dt})}
}
\newglossaryentry{ustm}
{
  name={USTM},
  description={unitary space-time modulation},
  first={\glsentrydesc{ustm} (\glsentrytext{ustm})}
}

\newglossaryentry{snn}
{
  name={SNN},
  description={scaled nearest-neighbor},
  first={\glsentrydesc{snn} (\glsentrytext{snn})}
}

\newglossaryentry{rcus}
{
  name={RCUs},
  description={random-coding union bound with parameter $s$},
  first={\glsentrydesc{rcus} (\glsentrytext{rcus})}
}
\newglossaryentry{osd}
{
  name={OSD},
  description={ordered statistics decoding},
  first={\glsentrydesc{osd} (\glsentrytext{osd})}
}
\newglossaryentry{llr}
{
  name={LLR},
  description={log-likelihood ratio},
  first={\glsentrydesc{llr} (\glsentrytext{llr})}
   plural={LLR's},
  descriptionplural={log-likelihood ratios},
  firstplural={\glsentrydescplural{llr} (\glsentryplural{llr})}
}
\newglossaryentry{qpsk}
{
  name={QPSK},
  description={quaternary phase-shift keying},
  first={\glsentrydesc{qpsk} (\glsentrytext{qpsk})}
}
\newglossaryentry{nlos}
{
  name={NLOS},
  description={non-line-of-sight},
  first={\glsentrydesc{nlos} (\glsentrytext{nlos})}
}
\newglossaryentry{los}
{
  name={LOS},
  description={line-of-sight},
  first={\glsentrydesc{los} (\glsentrytext{los})}
}
\newglossaryentry{cgf}
{
  name={CGF},
  description={cumulant-generating function},
  first={\glsentrydesc{cgf} (\glsentrytext{cgf})}
   plural={CGF's},
  descriptionplural={cumulant-generating functions},
  firstplural={\glsentrydescplural{cgf} (\glsentryplural{cgf})}
}
\newglossaryentry{pdf}
{
  name={PDF},
  description={probability density function},
  first={\glsentrydesc{pdf} (\glsentrytext{pdf})}
}
\newglossaryentry{gmi}
{
  name={GMI},
  description={generalized mutual information},
  first={\glsentrydesc{gmi} (\glsentrytext{gmi})}
}

\newglossaryentry{harqir}
{
  name={HARQ},
  description={Hybrid automatic repeat request},
  first={\glsentrydesc{harqir} (\glsentrytext{harqir})}
}

\newglossaryentry{harqcc}
{
  name={HARQ-CC},
  description={hybrid automatic repeat request with chase-combining},
  first={\glsentrydesc{harqcc} (\glsentrytext{harqcc})}
}

\newglossaryentry{fbl}
{
  name={FBL-NF},
  description={fixed blocklength and no feedback},
  first={\glsentrydesc{fbl} (\glsentrytext{fbl})}
}
\newglossaryentry{ssnf}
{
  name={SS-NF},
  description={single-shot no-feedback},
  first={\glsentrydesc{ssnf} (\glsentrytext{ssnf})}
}

\newglossaryentry{vlsf}
{
  name={VLSF},
  description={variable-length stop-feedback},
  first={\glsentrydesc{vlsf} (\glsentrytext{vlsf})}
}

\newglossaryentry{vlf}
{
  name={VLF},
  description={variable-length feedback},
  first={\glsentrydesc{vlf} (\glsentrytext{vlf})}
}

\newglossaryentry{cdf}
{
  name={CDF},
  description={cumulative distribution function},
  first={\glsentrydesc{cdf} (\glsentrytext{cdf})}
}

\newglossaryentry{ccdf}
{
  name={CCDF},
  description={complementary cumulative distribution function},
  first={\glsentrydesc{ccdf} (\glsentrytext{ccdf})}
}

\newglossaryentry{rce}
{
  name={RCEE},
  description={random-coding error-exponent},
  first={\glsentrydesc{rce} (\glsentrytext{rce})}
}

\newglossaryentry{grce}
{
  name={GRCEE},
  description={generalized random-coding error-exponent},
  first={\glsentrydesc{grce} (\glsentrytext{grce})}
}

\newglossaryentry{rcu}
{
  name={RCU},
  description={random-coding union},
  first={\glsentrydesc{rcu} (\glsentrytext{rcu})}
}

\newglossaryentry{qc}
{
  name={QC},
  description={quasi-cyclic},
  first={\glsentrydesc{qc} (\glsentrytext{qc})}
}

\newglossaryentry{wef}
{
  name={WEF},
  description={weight enumerator function},
  first={\glsentrydesc{wef} (\glsentrytext{wef})}
}



\usepackage{graphicx}
\usepackage{bm}
\usepackage{enumerate}
\usepackage{float}
\usepackage{multicol}
\usepackage{color}
\usepackage{url}

\usepackage{amsmath}
\usepackage{amsthm}
\usepackage{amssymb}

\usepackage{epstopdf}

\usepackage{array}






%


\DeclareMathOperator*{\argmax}{argmax}
\usepackage{chngcntr}

\newtheorem{lem}{Lemma}
\counterwithin*{lem}{paper} 
\counterwithin*{corollary}{paper} 
\usepackage[toc,page]{appendix}

%


\acrodef{BS}{base station}
\acrodef{RA}{receive antenna}
\acrodef{PA}{predictor antenna}
\acrodef{IID}{independent and identically distributed}
\acrodef{CDF}{cumulative distribution function}
\acrodef{PDF}{probability density function}
\acrodef{cu}{channel use}
\acrodef{ACK}{acknowledgement}
\acrodef{NACK}{negative acknowledgement}
\acrodef{SNR}{signal-to-noise ratio}
\acrodef{HARQ}{hybrid automatic repeat request}
\acrodef{CSIT}{channel state information at the transmitter side}
\acrodef{INR}{incremental redundancy}
\acrodef{npcu}{nats per channel use}


\newcommand{\TheThesisType}{licentiate of engineering} 
\newcommand{\TheAuthor}{Hao Guo}
\newcommand{\TheTitle}{Predictor Antenna Systems: Exploiting Channel State Information for Vehicle Communications} 
\newcommand{\TheSubtitle}{Abor am qui ipsa poristis ut faccaborum, temporio. Bo. Ic tor sit laboritius incto qui sit, utem eum sae none nis alitas maximperro quae}

 
\newcommand{\TheCoverPageSubtitle}{Abor am qui ipsa poristis ut faccaborum, temporio. Bo. Ic tor sit laboritius\\ incto qui sit, utem eum sae none nis alitas maximperro quae} 

\newcommand{\TheKeyword}{Beyond 5G, 6G, channel state information (CSI), hybrid automatic repeat request (HARQ), integrated access and backhaul (IAB), Marcum Q-function, mobility, mobile relay, outage probability, predictor antenna (PA), rate adaptation, relay, spatial correlation, temporal correlation, throughput, wireless backhaul.} 

\newcommand{\TheDepartment}{Electrical Engineering}
\newcommand{\TheYear}{2020}
\newcommand{\TheMonth}{May}
\newcommand{\TheDedication}{To my parents} 


\newcommand{\TheISSN}{1403-266X} 
\newcommand{\ThePrinter}{Chalmers Reproservice}

\newcommand{\CoverPageName}{\expandafter\MakeUppercase\expandafter{\TheAuthor}} 
\newcommand{\TitlePageName}{\TheAuthor} 
\newcommand{\TitlePageTitle}{\TheTitle} 
\newcommand{\ISSNPageName}{\TheAuthor} 
\newcommand{\ISSNPageTitle}{\TheTitle} 

\addpaper{Letter1}
 \addpaper{T1}
 \addpaper{Letter2}


\addnotincludedpaper%
    {A genetic algorithm-based beamforming approach for delay-constrained networks}%
    {\textbf{H. Guo}, B. Makki, and T. Svensson}%
    {in \textit{Proc. IEEE WiOpt}, Paris, France, May 2017, pp. 1-7}
    
\addnotincludedpaper%
    {A comparison of beam refinement algorithms for millimeter wave initial access}%
    {\textbf{H. Guo}, B. Makki, and T. Svensson}%
    {in \textit{Proc. IEEE PIMRCW}, Montreal, QC, Canada, Oct. 2017, pp. 1-7}

\addnotincludedpaper%
    {Genetic algorithm-based beam refinement for initial access in millimeter wave mobile networks}%
    {\textbf{H. Guo}, B. Makki, and T. Svensson}%
    {\textit{Wireless Communication and Mobile Computing}, Jun. 2018}

\newcommand{\ThesisTypePhD}{phd}
\newcommand{\ThesisTypeLic}{licentiate of engineering}
\ifx \TheThesisType\ThesisTypeLic
    
\else
    
    \ifx \TheThesisType\ThesisTypePhD \else
        \PackageError {ThesisTemplate} {Incorrect value of ``ThesisType``, choose either ``phd`` or ``lic``} {The variable ``ThesisType`` decides on the type of thesis to produce, choose ``phd`` to generate a PhD thesis or ``lic`` to generate a licentiate thesis.}
    \fi
\fi
\ifdefined \TheCoverPageSubtitle \if \TheCoverPageSubtitle\empty\else
    \ifdefined \TheSubtitle \if \TheSubtitle\empty\else
        \def\ConsistentSubtitle{true}
    \fi\fi
    \ifdefined \ConsistentSubtitle \else
        \PackageError {ThesisTemplate} {Definition of ``TheCoverPageSubtitle`` and ``TheSubtitle`` is inconsistent} {The variable ``TheCoverPageSubtitle`` can not be non-empty when the variable ``TheSubtitle`` is empty or undefined.}
    \fi
\fi\fi

%

\usepackage{hyperref}
\hypersetup{pdfcreator={LaTeX},	
	pdfkeywords={}, 
	pdfpagemode=UseOutlines, 
	pdfdisplaydoctitle=true, 
	pdflang=en, 
	colorlinks=false,	
	linkcolor=LinkColor, 
	citecolor=LinkColor, 
	filecolor=LinkColor, 
	urlcolor=LinkColor,
	menucolor=LinkColor, 
  bookmarks=true,
  bookmarksnumbered=true}

\pdfstringdefDisableCommands{%
	\def~{ }%
}
\hypersetup{
	pdftitle={Thesis for \TheThesisType}, 
	pdfauthor={\TheAuthor}, 
	pdfsubject={\TheTitle}
}
\addbibresource{main.bib} 
\usepackage{TemplateFiles/ChalmersThesis}

\hyphenation{nit-ri-fic-at-ion de-nit-ri-fic-at-ion bio-de-grad-ab-le re-im-ple-men-ted eu-tro-phi-ca-ti-on LibRAS bio-fil-ter}

\begin{document}

\newpage
\thispagestyle{empty}

\begin{center}
\textsc{thesis for the degree of \TheThesisType}
\vspace{0.5cm}
\hrule
\vspace{3 cm}
{\Large {\TitlePageTitle}\par}
\vspace{1 cm}
{\large\textsc{\TitlePageName}\par} 
\vspace{2 cm}
\includegraphics[width=5cm]{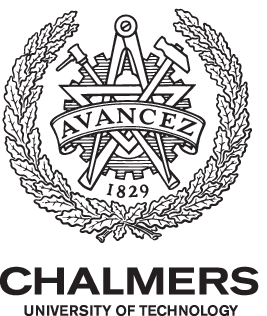}
\vfill
\vspace{0.5 cm}
Communication Systems Group\\
Department of Electrical Engineering \\
Chalmers University of Technology \\
Gothenburg, Sweden, \TheYear
\end{center}

\newpage

\thispagestyle{empty}

\noindent\textbf{\ISSNPageTitle}\\
\ifdefined \ISSNPageSubitle \if \ISSNPageSubitle\empty\else
    \noindent\footnotesize\textit{\ISSNPageSubitle}\\
\fi\fi\\
{\textsc{\ISSNPageName}}
\vspace{4.0 cm}\\
Copyright \copyright~\TheYear~\textsc{\ISSNPageName}\\ All rights reserved.\\
\vspace{1.5 cm}\\
ISSN \TheISSN\\
\vspace{1 cm}%
This thesis has been prepared using \LaTeX.\\
Communication Systems Group\\
Department of \TheDepartment \\
Chalmers University of Technology \\
SE-412 96 Gothenburg, Sweden \\
Phone: +46 (0)31 772 1000 \\
\texttt{www.chalmers.se} \\
\vfill
\noindent{Printed by \ThePrinter \\
Gothenburg, Sweden, \TheMonth~\TheYear}

\newpage


\thispagestyle{empty}
\centerline{ }
\vspace{1cm}

\rightline{{\emph{\TheDedication}}}

\thispagestyle{empty}

\EmptyPage 



\hypersetup{pageanchor=true}
\pagenumbering{roman}
\setcounter{page}{1}

\setlength{\parindent}{1em}

\section*{Abstract}
\addcontentsline{toc}{chapter}{Abstract}
Vehicle communication is one of the most important use cases in the fifth generation of wireless networks (5G).  The growing demand for quality of service (QoS) characterized by performance metrics, such as spectrum efficiency, peak data rate,  and outage probability, is mainly limited by inaccurate prediction/estimation of channel state information (CSI) of the rapidly changing environment around moving vehicles. One way to increase the prediction horizon of CSI in order to improve the  QoS is  deploying predictor antennas (PAs).  A PA system consists of two sets of antennas typically mounted on the roof of a vehicle, where the PAs positioned at the front of the vehicle are used to predict the CSI observed by the receive antennas (RAs) that are aligned behind the PAs. In  realistic PA systems, however, the actual benefit is affected by a variety of factors, including spatial mismatch, antenna utilization, temporal correlation of scattering environment, and  CSI estimation error. This thesis investigates different resource allocation schemes for the PA systems under practical constraints, with main contributions summarized as follows.

First, in Paper A, we study the PA system in the presence of the so-called spatial mismatch problem, i.e., when the channel observed by the PA is not exactly the same as the one experienced by the RA. We derive closed-form expressions for the throughput-optimized rate adaptation, and evaluate the system performance in various temporally-correlated conditions for the scattering environment. Our results indicate that PA-assisted adaptive rate adaptation leads to a considerable performance improvement, compared to the cases with no rate adaptation. Then, to simplify e.g., various integral calculations as well as different operations such as parameter optimization, in Paper B, we  propose a semi-linear approximation of the Marcum Q-function, and apply the proposed approximation to the evaluation of the PA system. We also perform deep analysis of the effect of various parameters such as antenna separation as well as CSI estimation error. As we show, our proposed approximation scheme enables us to analyze PA systems with high accuracy. 

The second part of the thesis focuses on improving the spectral efficiency of the PA system by involving the PA into data transmission. In Paper C,  we analyze the outage-limited performance of PA systems using hybrid automatic repeat request (HARQ). With our proposed approach, the PA is used not only for improving the CSI in the retransmissions to the RA, but also for data transmission in the initial round.  As we show in the analytical and the simulation results, the combination of PA and HARQ protocols makes it possible to improve the  spectral efficiency  and adapt transmission parameters to mitigate the effect of spatial mismatch.

\bigskip 
\noindent\textbf{Keywords:} \TheKeyword

\EmptyPageWithNumber 


\setlength{\parindent}{0pt}


\section*{List of Publications}
\addcontentsline{toc}{chapter}{List of Papers}
\includelistofpapers

\setlength{\parindent}{1em}

\EmptyPageWithNumber 


\section*{Acknowledgments}
\addcontentsline{toc}{chapter}{Acknowledgements}
\vspace{1cm}

As my PhD voyage is now half-way through, I would like to take the opportunity to recognize the people without whom this thesis would not have been possible.

First and foremost, I would like to express my deepest gratitude to Prof. Tommy Svensson for being my examiner and main supervisor and for giving me the opportunity to become a PhD student. Thank you for all the guidance, nice discussions and the constant support you have provided me for the last couple of years. This deepest gratitude also goes to my co-supervisor, Dr. Behrooz Makki, for over 1500 emails from you containing your fruitful comments and  detailed guidance, for the meetings you came all the way from Ericsson, for your valuable time even when you were super busy with the job hand-over and babies. I recommend everyone to collaborate with you and enjoy your kind, friendly, and pure personality.

Special thanks to Prof. Mohamed-Slim Alouini, for reviewing my draft and providing fruitful feedback and ideas. I would also like to thank Dr. Jingya Li for reading the rough draft of my paper, and all the help she has provided me outside of my research. I am also grateful to Prof. Mikael Sternad, Associate Prof. Carmen Botella, Prof. Xiaoming Chen, Dr. Fuxi Wen, and Dr. Nima Jamaly for all the nice discussions and collaborations we have had.

I would also like to thank the current and former members of the Communication Systems group. Many thanks to the head of our division, Prof. Erik Ström, and the head of our group, Prof. Fredrik Brännström, for ensuring a stimulating and joyful research atmosphere. Special thanks to Prof. Erik Agrell for our nice collaborations in the teaching work, and for everything you have shown me to be a kind and responsible teacher. Also, many thanks go to Professors Giuseppe Durisi, Henk Wymeersch, Alexandre Graell i Amat, Thomas Eriksson, and Jian Yang for the very interesting discussions in the courses. Gratitude also goes to Agneta,  Natasha,  Daniela,  Annie,  Yvonne, and  Madeleine for all your help. Thank you, Jinlin, Chao, Chenjie, Xinlin, Li, Yuxuan, Lei, Dapeng, Cristian, Johan, Keerthi, Anver, Shen, Roman, Chouaib, Mohammad, Björn, Sven, and Rahul for all the support and encouragement you have given to me.  I would also like to thank all my Chinese friends in Gothenburg for all the great moments we have experienced together. 

Special thanks to Yigeng, Xiao, Mengcheng, Pei, Hongxu, Qiang, and Kai for always being there for me. 

Finally, I would like to express my sincerest gratitude to mom and dad for your constant support, love, and encouragement over the years. I love you.

\vfill
\hfill   Hao Guo

\hfill  G\"{o}teborg, \TheMonth~\TheYear

\vfill

\clearpage

\newpage

\section*{Financial Support}
This work was supported in part by VINNOVA (Swedish Government Agency for Innovation Systems) within the VINN Excellence Center ChaseOn, and in part by the EC within the H2020 project 5GCAR. The simulations were performed in part on resources provided by the Swedish National Infrastructure for Computing (SNIC) at C3SE.

\newpage

\section*{Acronyms}
\vspace{1cm}
\addcontentsline{toc}{chapter}{Acronyms}
\begin{acronyms}
\item[2G/4G/5G/6G] Second/Fourth/Fifth/Sixth generation
\item[3GPP] 3rd generation partnership project
\item[ACK] Acknowledgment 
\item[ARQ/HARQ] Automatic repeat request/Hybrid automatic repeat request
\item[CDF] Cumulative distribution function
\item[CSI] Channel state information
\item[CSIT] Channel state information at the transmitter 
\item[DL] Downlink
\item[EMBB] Enhanced mobile broadband
\item[FDD] Frequency division duplex
\item[FSO] Free-space optical
\item[GPS] Global Positioning System
\item[IAB] Integrated access and backhaul
\item[i.i.d.] Identical and independently distributed
\item[INR]  Incremental redundancy
\item[LOS] Line-of-sight
\item[LTE] Long-Term Evolution
\item[MIMO] Multiple-input multiple-output
\item[MISO] Multiple-input single-output
\item[MRN] Moving relay node
\item[MTC] Machine-type communications
\item[NACK] Negative acknowledgment
\item[NLOS] Non-line-of-sight
\item[NMSE] Normalized mean squared error
\item[npcu] Nats-per-channel-use
\item[NR] New Radio
\item[OFDM] Orthogonal frequency-division multiplexing
\item[PA] Predictor antenna
\item[PDF] Probability density function
\item[QoS] Quality of service
\item[RA] Receive antenna
\item[RF] Radio-frequency
\item[RTD] Repetition time diversity
\item[SNR] Signal-to-noise ratio
\item[TDD] Time division duplex
\item[UL] Uplink
\item[URLLC] Ultra-reliable low-latency communications
\item[V2X] Vehicle-to-everything
\end{acronyms}



\tableofcontents
\clearpage

\EmptyPage

\setcounter{page}{1}
\renewcommand{\thepage}{\arabic{page}}

\setcounter{secnumdepth}{3}

\pagestyle{scrheadings}
\automark[section]{chapter}

\begin{refsection} 
\part{Overview}

\chapter{Introduction}
\label{ch:introduction}

\section{Background}
Nowadays, wireless communication and its related applications play  important roles in our life. Since the first mobile communication system employed in the early 1980s, new standards were established roughly every ten years, leading to the first commercial deployment of the \gls{5g} cellular networks in late 2019 \cite{ericsson2019,Dang2020what,andrews2014what}. From the  \gls{2g}, where the first digital communication system was deployed with text messages being available,  through the recent  \gls{4g} with \gls{3gpp} \gls{lte} being the dominant technology, to future \gls{5g} with \gls{nr} standardized by the \gls{3gpp} \cite{zaidi20185g}, one theme never changes: the growing demand for high-speed, ultra-reliable, low-latency and energy-efficient wireless communications with limited radio spectrum resource.

According to the Ericsson mobility report \cite{ericsson2019}, the total number of mobile subscriptions has exceeded 8.1 billion today,  with \gls{4g} being the major standard, and it is expected that this number will reach around 9 billion with over 20\% being supported by \gls{nr} by the end of 2024 \cite{ericsson2019}. Thanks to the higher bandwidth (usually larger than 1 GHz) at millimeter wave frequency spectrum, as well as the development of multi antenna techniques, new use cases in \gls{5g}, such as intelligent transport systems, autonomous vehicle control, virtual reality, factory automation, and providing coverage to high-mobility users, have been developed rapidly \cite{Simsek2016JSAC5g}. These use cases are usually categorized into three distinct classes by the standardization groups of \gls{5g} \cite{osseiran2014scenarios}:

\begin{itemize}
\item[i)]  \gls{emb} deals with large data packets and how to deliver them using high data rates \cite{dahlman20185g}. This can be seen as a natural extension of the current established \gls{lte} system that is designed for the similar use case. Typical  \gls{emb} applications involve high-definition video steaming, virtual reality, and online gaming.
\item[ii)] \gls{mtc} is a new application in \gls{5g}, which targets at providing wide coverage to a massive number of devices such as sensors who send sporadic updates to a \gls{bs} \cite{bockelmann2016massive}. Here, the key requirements are energy consumption, reliability, and scalability. High data rate and low latency, on the other hand, are of secondary importance. 
\item[iii)] \gls{urllc} concerns mission-critical applications with stringent requirements on reliability and latency \cite{bockelmann2016massive}. In this type of use case, the challenge is to design protocols which can transmit data with very low error probability and fulfill the latency constraint at the same time. Applications falling into this category include real-time control in smart factories, remote medical surgery, and \gls{v2x} communications which mainly focus on safety with high-mobility users. 
\end{itemize}

This thesis targets  both \gls{emb} and \gls{urllc}. More specifically, this work develops efficient (high data rate) and reliable (low error probability) \gls{v2x} schemes with latency requirement, using the \gls{pa} concept. A detailed review of the \gls{v2x} communications and  the \gls{pa} concept, as well as the associated research challenges, are presented in the following sub-sections.

\subsection{Vehicle Communications in 5G and Time/Space-Varying Channel}

Providing efficient, reliable broadband wireless communication links in high mobility use cases, such as high-speed railway systems, urban/highway vehicular communications, has been incorporated as an important part of the \gls{5g} developments \cite{imt2015}. According to \cite{samsung2015}, \gls{5g} systems are expected to support a large number of users traveling at speeds up to 500 km/h, at a data rate of 150 Mbps or higher. One interesting scenario in \gls{5g} vehicle communication is the \gls{mrn}, where a significant number of users could access cellular networks using moving relays, e.g., at public transportation such as buses, trams, and trains, via their smart phones or laptops \cite{yutao2013moving}. As one type of \gls{mrn}, one can consider the deployment of \gls{iab} nodes on top of the vehicles \cite{Teyeb2019VTCintegrated},  where part of the radio resources is used for wireless backhauling. In this way, moving \gls{iab} nodes can provide feasible solutions for such relay densification systems in \gls{5g}\footnote{It should be noted that mobile \gls{iab} is not supported in \gls{3gpp} Rel-16 and 17. However, it is expected to be discussed in the next releases.}.

Most current cellular systems can support users with low or moderate mobility, while high moving speed would  limit the coverage area and the data rate significantly. For example, \gls{4g} systems are aimed at supporting users perfectly at the speed of 0-15 km/h, serving with high performance from 15 km/h to 120 km/h, and providing functional services at 120-350 km/h \cite{3gpp2014requirements}.  On the other hand, field tests at different places \cite{Wu2016IEsurvey} have shown that current \gls{4g} systems can only provide 2-4 Mbps data rate in high-speed trains. To meet the requirement of high data rate at high moving speed in future mobility communication systems, new technologies that are able to cope with the challenges of mobility need to be developed.

With the setup of \gls{mrn} and other \gls{v2x} applications such as vehicle platooning \cite{guo2017wiopt,guo2017pimrc,guo2018hindawi} and remote driving \cite{Liang2017TVTvehiclar}, different technologies can be applied to improve the system performance at high speeds. For example, strategies in current standard aiming at improving the spectral efficiency include \gls{mimo}, \gls{csi}-based scheduling, and adaptive modulation and coding. Moreover, in the future standardization, techniques such as \gls{comp} \gls{jt} and massive \gls{mimo} will be also involved. All these techniques have one thing in common: they require accurate estimation of \gls{csit}  with acceptable cost. However, this is not an easy task. The main reason is that the channel in vehicle communication has certain features which makes it difficult to acquire \gls{csit} \cite{Wu2016IEsurvey}:
\begin{itemize}
\item[i)] \textit{Fast time-varying fading}: For high-speed vehicles, the channel has fast time-variation due to large Doppler spread. Let us consider a simple example. Assume a vehicle operating at a speed of 200 km/h and a frequency of 6 GHz. Then, the maximum Doppler frequency is obtained by $f_\text{D} = v/\lambda = $1111 Hz, which corresponds to a channel coherence time of around 900 $\mu$s. However, in \gls{lte} the control loop time with both \gls{ul} and \gls{dl} is around 2 ms, which makes \gls{csit} outdated if we consider the \gls{tdd} system with channel reciprocity. Moreover, the speeds of moving terminals are usually time-varying, making the channel even more dynamic.
\item[ii)] \textit{Channel estimation errors}: Due to the time-varying channel, it is not practical to assume perfect \gls{csit}, as we do for low mobility systems. In fact, mobility causes difficulties not only on accurately estimating the channel, but also on tracking, updating and predicting the fading parameters. Also, the estimation error may have remarkable effects on system performance, which makes this aspect very important  in the system design.
\item[iii)] \textit{Doppler diversity}: Doppler diversity has been developed for systems with perfect \gls{csit}, in which it provides diversity gain to improve system performance. On the other hand, Doppler diversity may cause high channel estimation error, which makes it important to study the trade-off between Doppler diversity and estimation errors.
\end{itemize} 
Besides these three aspects, there are also some issues for the channel with mobility, e.g., carrier frequency offset, inter-carrier interference, high penetration loss, and frequent handover. To conclude, with the existing methods and depending on the vehicle speed, channel coefficients may be outdated at the time of transmission, due to various delays in the control loop and the mobility of the vehicles.

The use of channel predictions can alleviate this problem. By using the statistics over time and frequency, combining with linear predictors such as Kalman predictor, the channel coefficients can be predicted for around 0.1-0.3 carrier wavelengths in space \cite{Sternad2012WCNCWusing}. This prediction horizon is enough for \gls{4g} systems with short control loops (1-2 ms) or for  users with pedestrian velocities. However, it is inadequate for vehicular velocities at high frequencies.

\subsection{Predictor Antenna and Related Work}
To overcome the issue of limited prediction horizon in the rapidly changed channel with mobility, and to support use cases such as \gls{mrn}, \cite{Sternad2012WCNCWusing} proposed the concept of \gls{pa}. Here, the \gls{pa} system refers to  a setup with two sets of antennas on the roof of a vehicle, where the \glspl{pa} positioned in the front of the vehicle are used to predict the channel state observed by one \gls{ra} or a set of \glspl{ra} that are aligned behind the \glspl{pa}, and send the \gls{csi} back to the \gls{bs}. Then, if the \gls{ra} reaches the same point as the \gls{pa}, the \gls{bs} can use the  \gls{csi} obtained from the \glspl{pa} to improve the transmission to the \glspl{ra} using, for example, power/rate adaptations and beamforming. The results in \cite{Sternad2012WCNCWusing} indicate that the \gls{pa} system can provide sufficiently accurate  channel estimation for at least one wavelength in the \gls{los} case, and \cite{Jamaly2014EuCAPanalysis} shows that with a smoothed roof of the vehicle to avoid refraction, abnormal reflection and scattering, and  with antenna coupling compensation at least 3 wavelengths can be predicted in both \gls{los} and \gls{nlos} conditions.

Following \cite{Sternad2012WCNCWusing}, \cite{BJ2017ICCWusing,phan2018WSAadaptive,Apelfrojd2018PIMRCkalman} provide experimental validation to prove the feasibility of the \gls{pa} concept. Specifically, \cite{BJ2017ICCWusing} presents an order of magnitude increase of prediction horizons compared to time-series-based prediction. Moreover, \cite{phan2018WSAadaptive} shows that the  \gls{pa} concept works for massive \gls{mimo} \glspl{dl} where the \gls{pa} can improve the signal-to-interference ratio in setups  with \gls{nlos} channels. Also, \cite{Apelfrojd2018PIMRCkalman} demonstrates that the Kalman smoothing-based \gls{pa} system enables up to 0.75 carrier wavelengths prediction at vehicle speeds for Rayleigh-like \gls{nlos} fading channels. The review of \cite{Sternad2012WCNCWusing,BJ2017ICCWusing,phan2018WSAadaptive,Apelfrojd2018PIMRCkalman} reveals the following research problems:
\begin{itemize}
\item[i)] \textit{Speed sensitivity}: From the results in \cite{BJ2017ICCWusing,phan2018WSAadaptive,Apelfrojd2018PIMRCkalman}, we can observe that, for given control loop time, if the speed is too low or too high which leads to large distances, i.e., spatial mismatch,  between the spot where the \gls{pa} estimates the channel and the spot where the \gls{ra} reaches at the second time slot, the accuracy of prediction decreases drastically. We cannot make sure that the speed of the vehicle remains the same all the time, which may lead to performance loss. Indeed, \cite{DT2015ITSMmaking} and \cite{Jamaly2019IETeffects} have addressed this kind of spatial mismatch problem in the \gls{pa} system. In \cite{DT2015ITSMmaking}, an interpolation-based beamforming scheme is proposed for \gls{dl} \gls{miso} systems to solve the mis-pointing problem. From another perspective, \cite{Jamaly2019IETeffects} studies the effect of  velocity variation on  prediction performance. However, how to analytically study  speed sensitivity of  the \gls{pa} system remains unclear.
\item[ii)] \textit{Lack of analytical model}: As we can see, \cite{Sternad2012WCNCWusing,BJ2017ICCWusing,phan2018WSAadaptive,Apelfrojd2018PIMRCkalman} are based on real-world testing data which validates the concept, while \cite{DT2015ITSMmaking} is based on simulated channel, and \cite{Jamaly2019IETeffects}  focuses more on the antenna pattern. No analytical model of the \gls{pa} system has been proposed in \cite{Sternad2012WCNCWusing,BJ2017ICCWusing,phan2018WSAadaptive,Apelfrojd2018PIMRCkalman,DT2015ITSMmaking,Jamaly2019IETeffects}. Moreover, as mentioned in the previous item, we need an analytical tool to study the sensitivity of the system performance to speed variation.
\item[iii)] \textit{What else can we do with the \gls{pa} system}: As we can see from the results in \cite{Sternad2012WCNCWusing,BJ2017ICCWusing,phan2018WSAadaptive,Apelfrojd2018PIMRCkalman,DT2015ITSMmaking,Jamaly2019IETeffects}, although the \gls{pa} system can provide larger prediction horizons for up to three wavelengths, there is still a limit on the region, and the system is quite sensitive to vehicle speed. Hence, additional structure/schemes could potentially be built on top of the \gls{pa} system to achieve better performance.
\item[iv)]  \textit{When to use the \gls{pa} system}: The key point of the \gls{pa} concept is to use an additional antenna to acquire better quality of \gls{csit}. In this way, the time-frequency resources of the \gls{pa} are used for channel prediction instead of data transmission. Intuitively, there should exist a condition under which the \gls{pa} concept could be helpful, compared to the case with simply using the \gls{pa} as one of the \glspl{ra}. Here, theoretical models may help us  make such decisions.
\end{itemize}



\section{Scope of the Thesis}
The aim of this thesis is to present analytical evaluation of the \gls{pa} system and, at the same time, to apply some key-enablers of \gls{urllc}, such as rate adaptation, \gls{harq}, and power allocation, considering imperfect \gls{csi} estimation.
The channel considered in this thesis is the non-central Chi-square distributed fading channel, which we model as the combination of the known part of the channel from the \gls{pa}, and the uncertainty part from the spatial mismatch. Firstly, in Paper A,  we present our proposed analytical model for evaluating the sensitivity of the \gls{pa} system with spatial mismatch. Some preliminary work on how to use rate adaptation based on  imperfect \gls{csi} is also presented. 

In Paper B, we first develop a mathematical tool that can be used to remarkably simplify the analysis of our proposed channel, some integral calculations, as well as optimization problems that contain the first-order Marcum $Q$-function. Then, we extend the work in Paper A and perform deep analysis of the effect of various parameters, such as processing delay of the \gls{bs} and imperfect feedback schemes. 

Besides the results in Paper A and B, we are also interested in how to further exploit the \gls{pa} system by, e.g., involving the \gls{pa} partly into the transmission process.
In Paper C, we propose an \gls{harq}-based \gls{pa} system which uses the \gls{bs}-\gls{pa} link for the initial transmission, and the feedback bit on the decoding results combined with the \gls{csi} estimation  for adapting the transmission parameters during  the   \gls{bs}-\gls{ra} transmission. Moreover, we develop power allocation schemes based on the \gls{harq}-\gls{pa} structure and study the outage-constrained average power of the system.

The specific objectives of this thesis can be summarized as follows.
\begin{itemize}
\item[i)] To characterize the speed sensitivity of the \gls{pa} system by analytically modeling the channels in  \gls{pa} systems.
\item[ii)] To develop a mathematical tool in order to simplify the performance evaluation of the \gls{pa} setup which involves the Marcum $Q$-function. 
\item[iii)] To design efficient and reliable transmission schemes which are able to improve the performance of existing \gls{pa} systems.
\end{itemize}

\section{Organization of the Thesis}
In Chapter 2, we introduce the \gls{pa} setups that are considered in the thesis. Specifically, we model the spatial mismatch in the \gls{pa} system and define the data transmission  model. The details of the channel model which involve the Marcum Q-function are also presented. To help the analytical evaluations, we provide a review of the use cases of the Marcum Q-function in a broad range of research areas, and present our proposed semi-linear approximation of the Marcum Q-function, with its applications on integral calculations and optimizations. In Chapter 3,  we present different resource allocation schemes, namely, rate adaptation and \gls{harq}-based power allocation, to improve the performance of the \gls{pa} system under the mismatch problem. For each scheme, we show the problem formulation, the data transmission model as well as the details of the proposed method. Finally, in Chapter 4, we provide a brief overview of our contributions in the attached papers, and discuss possible future research directions.

\chapter{PA Systems and Analytical Channel Model}\label{chapter:two}
This chapter first introduces the PA concept in a \gls{tdd} \footnote{The \gls{pa} concept can be applied in both \gls{tdd} and \gls{fdd} systems. In \gls{fdd} the \gls{pa} estimates the \gls{dl} channel based on \gls{dl} pilots from the \gls{bs}, and reports back using an \gls{ul} feedback channel. The \gls{bs} uses this information (as input) to obtain the \gls{dl} channel estimate to be used for the \gls{dl} towards the \gls{ra} when the \gls{ra} reaches the same spatial point as the \gls{pa} at the time of \gls{dl} estimation. On the other hand, in \gls{tdd} the \gls{pa} instead sends the pilots and the \gls{bs} estimates the \gls{ul} channel, and uses that in combination with channel reciprocity information (as input) to obtain the \gls{dl} channel estimate to be used for the \gls{dl} towards the \gls{ra}.} \gls{dl} \footnote{This thesis mainly focus on the \gls{dl}, but the \gls{pa} concept can be adapted to be used also in the \gls{ul} case.} system, where one \gls{pa} and one \gls{ra} are deployed at the receiver side. Also, the associated challenges and difficulties posed by practical constraints are discussed. Then, the proposed analytical channel model based on Jake's assumption is presented, where the \gls{cdf} of the channel gain is described by the first-order Marcum $Q$-function. Finally, to simplify the analytical derivations, we develop a semi-linear approximation of the first-order Marcum $Q$-function which can simplify, e.g., integral calculations as well as optimization problems.

\section{The PA Concept}
In \gls{5g}, a significant number of users would access wireless networks in vehicles, e.g., in public transportation like trams and trains or private cars, via their smart phones and laptops \cite{shim2016traffic,lannoo2007radio,wang2012distributed,Dat2015,Laiyemo2017,yutao2013moving,Marzuki2017,Andreev2019,Haider2016,Patra2017}. In \cite{shim2016traffic}, the emergence of vehicular heavy user traffic is observed by field experiments conducted in 2012 and 2015 in Seoul, and the experimental results reveal that such traffic is becoming dominant, as shown by the 8.62 times increase from 2012 to 2015 in vehicular heavy user traffic, while total traffic increased only by 3.04 times. Also, \cite{lannoo2007radio,wang2012distributed,Dat2015,Laiyemo2017} develop traffic schemes and networks for users in high-speed trains. Setting an \gls{mrn} in vehicles can be one promising solution to provide a high-rate reliable connection between a \gls{bs} and the users inside the vehicle \cite{yutao2013moving,Marzuki2017,Andreev2019}. From another perspective, \cite{Haider2016} and \cite{Patra2017} adopt femtocell technology inside a vehicle to provide better spectral and energy efficiency compared to the direct transmission scheme.

In such a so-called hot spot scenario, we often deploy \gls{tdd} systems with channel reciprocity. It is  intuitively because we have more data in \gls{dl} than in \gls{ul}. Here, we estimate the \gls{dl} channel based on the \gls{ul} pilots. Then, the problem may occur because of the movement, and the channel in the \gls{dl} would not be the same as the one in the \gls{ul}. This could be compensated for by extrapolating the \gls{csi} from the \gls{ul}, for example by using Kalman predictions \cite{Aronsson2011}. However, it is difficult to predict small-scale fading by extrapolating the past estimates and the prediction horizon is limited to 0.1-0.3$\lambda$ with $\lambda$ being the carrier wavelength \cite{Ekman2002}. Such a horizon is satisfactory for pedestrian users, while for high mobility users, such as vehicles, a prediction horizon beyond 0.3$\lambda$ is usually required \cite{Apelfrojd2018PIMRCkalman}. One possible way to increase the prediction horizon is to have a database of pre-recorded coordinate-specific \gls{csi}  at the \glspl{bs} \cite{zirwas2013channel}. Here, the basic idea is that the users provide the \glspl{bs} with their location information by, e.g., \gls{gps}, and the \gls{bs} could use the pre-recorded information to predict the channel environment. However, such a method requires large amount of data which may need to be updated frequently, and \gls{gps} position data would also not be accurate for small scale fading prediction, since the accuracy is much worse than a wavelength for typical mobile communications systems.


\begin{figure}
    \centering
    \includegraphics[width=0.7\columnwidth]{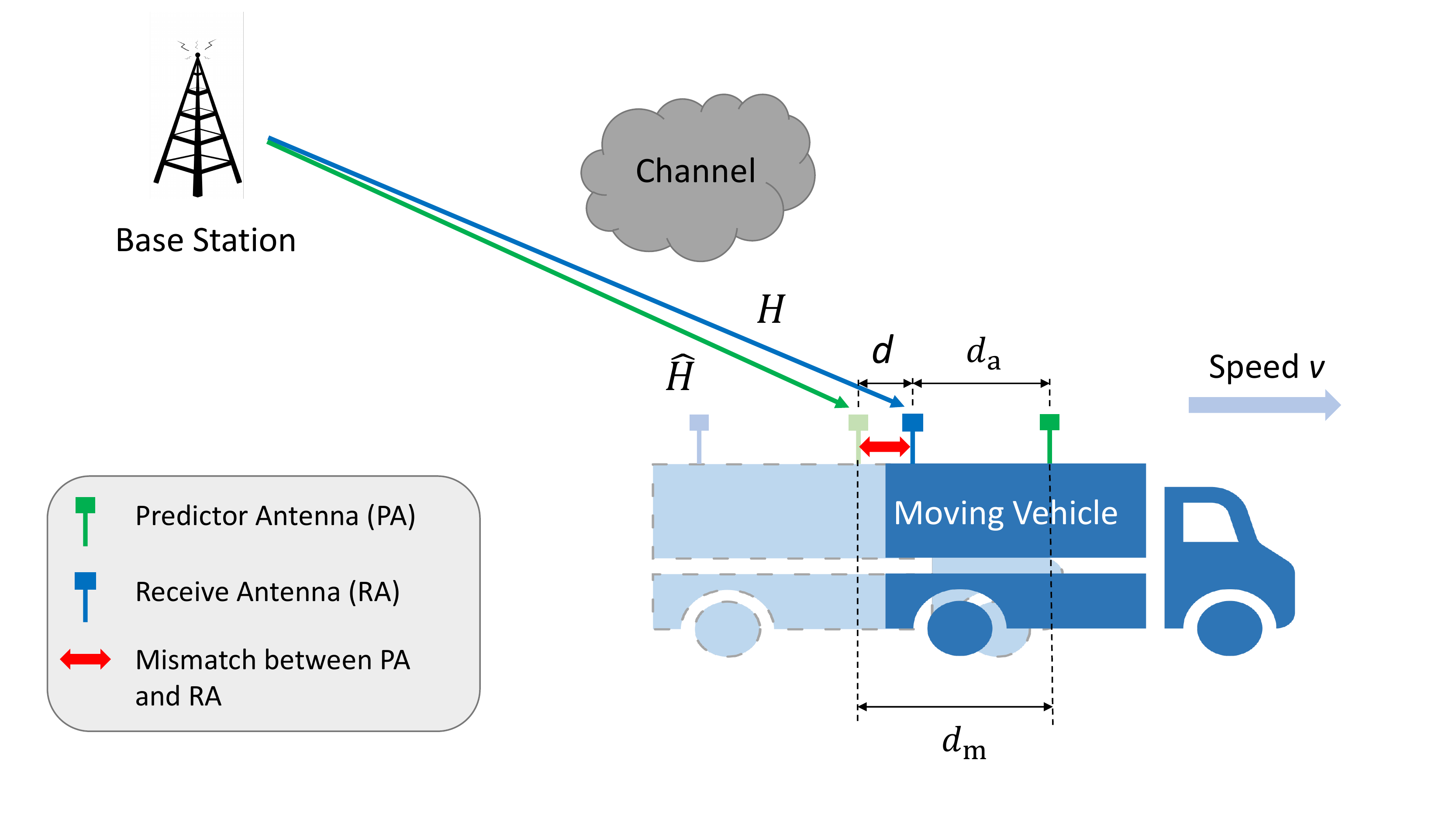}
    \caption{The PA concept with spatial mismatch problem.}
    \label{fig:PA_mismatch}
\end{figure}

To overcome this issue,  \cite{Sternad2012WCNCWusing} proposes the concept of \gls{pa} wherein at least two antennas are deployed on top of the vehicle.  As can be seen from Fig. \ref{fig:PA_mismatch}, the first antenna, which is the \gls{pa}, estimates the channel $\hat{H}$ in the \gls{ul} to the \gls{bs}. Then, the \gls{bs} uses the information received about $\hat H$ to estimate the channel $H$, and communicate with a second antenna, which we refer to as the \gls{ra}, when it arrives to the same position as the \gls{pa}. Then, a problem appears: how should we model such a channel? The intuitive idea is that the correlation between $H$ and $\hat{H}$ should be affected by the moving speed $v$, the time for \gls{ul} and \gls{dl}, as well as the antenna separation $d_\text{a}$ between the \gls{pa} and the \gls{ra}.

One way to evaluate such a model is to measure $H$ and $\hat{H}$ under different system configurations, and calculate the \gls{nmse} of $H$ and $\hat{H}$. Followed by \cite{Sternad2012WCNCWusing}, experimental results in \cite{BJ2017ICCWusing} and \cite{BJ2017PIMRCpredictor}  show that an \gls{nmse} of around -10 dB can be obtained for speeds up to 50 km/h, with all measured predictions horizons up to 3$\lambda$, which is ten times longer than the limit for Kalman filter-based channel extrapolation.  In \cite{BJ2017ICCWusing,BJ2017PIMRCpredictor,phan2018WSAadaptive} \gls{fdd} systems are considered, where dense enough \gls{dl} channel estimation pilots with \gls{ofdm} are used. On the other hand, for \gls{tdd} systems, the \gls{ul} and \gls{dl} frames need to be adjusted so that the estimation of $H$ can be as close as possible to $\hat{H}$, as proposed and evaluated in \cite{DT2015ITSMmaking}. However, such a method would need to adapt \gls{ul} and \gls{dl} ratios for each user, which is complicated from system design point of view.  To mitigate this issue, \cite{DT2015ITSMmaking} also proposes an interpolation scheme at the \gls{bs} which is suitable for different \gls{ul} and \gls{dl} ratios. Also, a Kalman smoother for the interpolation of the \gls{pa} for the \gls{tdd} case with a two-filter approach is proposed in \cite{Apelfrojd2018PIMRCkalman}, where the \gls{csi} quality of the \gls{dl} can be improved such that the duration of the \gls{dl} can be extended remarkably. Moreover, it is shown that the correlation between  $H$ and $\hat{H}$ would be reduced, if the \gls{pa} and the \gls{ra} are too close to each other, e.g., 0.2-0.4$\lambda$. Different ways to compensate for such a coupling effect, such as open-circuit decoupling,  are proposed in \cite{Jamaly2014EuCAPanalysis, Jamaly2019IETeffects}.

\section{Challenges and Difficulties}
Previous studies have shown that deploying the \gls{pa} system can provide significant performance gains in terms of, e.g., \gls{nmse} \cite{BJ2017ICCWusing,BJ2017PIMRCpredictor,phan2018WSAadaptive}.  However,  realistic gains can be limited by many practical constraints. In this section, we discuss a number of such challenges that have been partly addressed in this work.

\subsection*{Lack of Analytical Model}

In the literature, most \gls{pa} work rely on experimental measurements and simulations. This is sufficient for the validation purpose. However, to have a deeper understanding of  the \gls{pa} system, it is useful to develop analytical models. There are different statistical wireless channel models such as Rayleigh, Rice, Nakagami, and log-normal fading, as well as their combinations on multi-path and shadow fading components \cite{vatalaro1995generalized,tjhung1999fade}. Here, obtaining an exact analytical model for the \gls{pa} system may be difficult, but understanding the correlation between $H$ and $\hat{H}$ would be a good starting point.

\subsection*{Spatial Mismatch}
As addressed in, e.g., \cite{DT2015ITSMmaking,Jamaly2019IETeffects}, even assuming that the channel does not change over time, if the \gls{ra} does not arrive at the same point as the \gls{pa}, the actual channel for the \gls{ra} would not be identical to the one experienced by the PA before. As can be seen in Fig. \ref{fig:PA_mismatch} with \gls{tdd} setup, considering one vehicle deploying two antennas on the roof with one \gls{pa} positioned in the front of the moving direction and an \gls{ra} aligned behind the \gls{pa}. The idea of the data transmission model with \gls{tdd} is that the \gls{pa} first sends pilots at time $t$, then the \gls{bs} estimates the channel and sends the data at time $t+\delta$ to the \gls{ra}. Here, $\delta$ depends on the processing time at the \gls{bs}. Then, we define $d$ as the effective distance between the position of the PA at time $t$ and the position of the RA at time $t+\delta$, as can be seen in Fig. \ref{fig:PA_mismatch}. That is,  $d$ is given  by
\begin{align}\label{eq_d}
    d = |d_\text{a} - d_\text{m} | = |d_\text{a} - v\delta|,
\end{align}
where $d_\text{m}$ is the moving distance between  $t$ and $t+\delta $ while $v$ is the velocity of the vehicle. To conclude, different values of $v$, $ \delta$, $f_\text{c}$ and $d_\text{a}$ in (\ref{eq_d}) correspond to different values of $d$. We would like to find out how to connect $H$ and $\hat{H}$ as a function of $d$, and how different values of $d$ would affect the system performance.

\subsection*{Spectral Efficiency Improvement}
In a typical \gls{pa} setup, the spectrum is underutilized, and the spectral efficiency could be further improved in case the \gls{pa} could be used not only for channel prediction  but also for data transmission. However, proper data transmission schemes need to be designed to make the best use of the \gls{pa}.

\subsection*{Temporal Correlation}
The overhead from the \gls{ul}-\gls{dl} structure of the \gls{pa} system would affect the accuracy of the \gls{csi} acquisition, i.e., $\hat{H}$ obtained from \gls{pa} would change over time. Basically, the slowly-fading channel is not always a realistic model for fast-moving users, since the channel may change according to the environmental effects during a transmission block \cite{tsao2007prediction,Makki2013TCfeedback,Makki2014ISWCSreinforcement}. There are different ways to model the temporally-correlated channel, such as using the first-order Gauss-Markov process \cite{Makki2013TCfeedback,Makki2014ISWCSreinforcement}.

\subsection*{Estimation Error}
There could  be channel estimation errors from the \gls{ul} \cite{jose2011pilot}, which would degrade the system performance. The assumption of perfect channel reciprocity in \gls{tdd} ignores two important facts \cite{Mi2017massive}: 1) the \gls{rf} chains of the \gls{ul} and the \gls{dl} are separate circuits with random impacts on the transmitted and received signals \cite{Mi2017massive,lu2014an}, which is the so-called \gls{rf} mismatch; 2) the interference profile at the transmitter and receiver sides are different \cite{tolli2017compensation}. These deviations are defined as reciprocity errors that invalidate the assumption of perfect reciprocity, and should be considered in the system design.

\subsection*{Effects of Other Parameters}
As mentioned in Fig. \ref{fig:PA_mismatch}, different system parameters such as the speed $v$, the antenna separation $d_\text{a}$ and the control loop time $\delta$ would affect the system behaviour by, e.g., spatial mismatch or antenna coupling. Our goal is to study the effect of these parameters and develop robust schemes which perform well for a broad range of their values.

\section{Analytical Channel Model}
Considering \gls{dl} transmission in the \gls{bs}-\gls{ra} link, which is our main interest, the received signal is given by \footnote{In this work, we mainly focus on the cases with single \gls{pa} and \gls{ra} antennas. The future works will address the problem in the cases with array antennas.}
\begin{align}\label{eq_Y}
{{Y}} = \sqrt{P}HX + Z.
\end{align}
Here, $P$ represents the average received power at the \gls{ra}, while $X$ is the input message with unit variance, and $H$ is the fading coefficient between the \gls{bs} and the \gls{ra}. Also, $Z \sim \mathcal{CN}(0,1)$ denotes the \gls{iid} complex Gaussian noise added at the receiver.

We denote the channel coefficient of the \gls{pa}-\gls{bs} \gls{ul} by $\hat{H}$ and we assume that $\hat{H}$ is perfectly known by the \gls{bs}. The result can be extended to the cases with imperfect \gls{csi} at the BS (see our work \cite{guo2020semilinear}). In this way,  we use the spatial correlation model \cite[p. 2642]{Shin2003TITcapacity}
\begin{align}\label{eq_tildeH}
    \Tilde{\bm{H}} = \bm{\Phi}^{1/2} \bm{H}_{\varepsilon},
\end{align}
where $\Tilde{\bm{H}}$ = $\bigl[ \begin{smallmatrix}
  \hat{H}\\H
\end{smallmatrix} \bigr]$ is the channel matrix including both \gls{bs}-\gls{pa} channel $\hat{H}$ and \gls{bs}-\gls{ra} channel $H$ links. $\bm{H}_{\varepsilon}$ has independent circularly-symmetric zero-mean complex Gaussian entries with unit variance, and $\bm{\Phi}$ is the channel correlation matrix.

In general, the spatial correlation of the fading channel depends on the distance between the \gls{ra} and the \gls{pa}, which we denote by $d_\text{a}$, as well as the angular spectrum of the radio wave pattern. If we use the classical Jakes' correlation model by assuming uniform angular spectrum, the $(i,j)$-th entry of $\bm{\Phi}$ is given by \cite[Eq. 1]{Chizhik2000CLeffect}
\begin{align}\label{eq_phi}
    \Phi_{i,j} = J_0\left((i-j)\cdot2\pi d/ \lambda\right).
\end{align}
Here, $J_0(\cdot)$ is the zeroth-order Bessel function of the first kind. Also, $\lambda = c/f_\text{c}$ represents the wavelength  where $c$ is the speed of light and $f_\text{c}$ is the carrier frequency. 

As discussed before, different values of $v$, $ \delta$, $f_\text{c}$ and $d_\text{a}$ in (\ref{eq_d}) correspond to different values of $d$, which leads to different levels of channel spatial correlation~(\ref{eq_tildeH})-(\ref{eq_phi}).

Combining (\ref{eq_tildeH}) and (\ref{eq_phi}) with normalization,  we have
\begin{align}\label{eq_H}
    H = \sqrt{1-\sigma^2} \hat{H} + \sigma q,
\end{align}
where $q \sim \mathcal{CN}(0,1)$ which is independent of the known channel value $\hat{H}\sim \mathcal{CN}(0,1)$, and $\sigma$ is a function of the mismatch distance $d$.

From (\ref{eq_H}), for a given $\hat{H}$ and $\sigma \neq 0$, $|H|$ follows a Rician distribution, i.e., the \gls{pdf} of $|H|$ is given by 
\begin{align}
  f_{|H|\big|\hat{H}}(x) = \frac{2x}{\sigma^2}e^{-\frac{x^2+(1-\sigma^2)\hat{g}}{\sigma^2}}I_0\left(\frac{2x\sqrt{(1-\sigma^2)\hat{g}}}{\sigma^2}\right),   
\end{align}
with $\hat{g} = |{\hat{H}}|^2$, and $I_n(x) = (\frac{x}{2})^n \sum_{i=0}^{\infty}\frac{(\frac{x}{2})^{2i} }{i!\Gamma(n+i+1)}$ being the $n$-th order modified Bessel function of the first kind where $\Gamma(z) = \int_0^{\infty} x^{z-1}e^{-x} \mathrm{d}x$ denotes the Gamma function. Then, we define the channel gain between BS-RA as $ g = |{H}|^2$. By changing variables from $H$ to $g$, the \gls{pdf} of $f_{g|\hat{H}}$ is given by

\begin{align}\label{eq_pdf}
    f_{g|\hat{H}}(x) = \frac{1}{\sigma^2}e^{-\frac{x+(1-\sigma^2)\hat{g}}{\sigma^2}}I_0\left(\frac{2\sqrt{x(1-\sigma^2)\hat{g}}}{\sigma^2}\right),
\end{align}
which is non-central Chi-squared distributed, and the \gls{cdf} is
\begin{align}\label{eq_cdf}
    F_{g|\hat{H}}(x) = 1 - Q_1\left( \sqrt{\frac{2(1-\sigma^2)\hat{g}}{\sigma^2}}, \sqrt{\frac{2x}{\sigma^2}}  \right).
\end{align}
Here,  $Q_1(\alpha,\beta)$ is Marcum $Q$-function and it is defined as  \cite[Eq. 1]{Bocus2013CLapproximation}
\begin{align}\label{eq_q1}
    Q_1(\alpha,\beta) = \int_{\beta}^{\infty} xe^{-\frac{x^2+\alpha^2}{2}}I_0(x\alpha)\text{d}x,
\end{align}
where $\alpha, \beta \geq 0$.

We study the system performance in various temporally-correlated conditions, i.e., when $H$ is not the same as $\hat{H}$ even at the same position. Particularly, using the same model as in  \cite[Eq. 2]{Makki2013TCfeedback}, we further develop our channel model  (\ref{eq_H}) as
\begin{align}\label{eq_Htp}
    H_{k+1} = \beta H_{k} + \sqrt{1-\beta^2} z, 
\end{align}
for each time slot $k$, where $z \sim \mathcal{CN}(0,1)$ is a Gaussian noise which is uncorrelated with $H_{k}$. Also, $\beta$ is a known correlation factor which represents two successive channel realizations dependencies by $\beta = \frac{\mathbb{E}\{H_{k+1}H_{k}^*\}}{\mathbb{E}\{|H_k|^2\}}$. Substituting (\ref{eq_Htp}) into (\ref{eq_H}), we have
\begin{align}\label{eq_Ht}
    H_{k+1} = \beta\sqrt{1-\sigma^2}\hat{H}_{k}+\beta\sigma q+\sqrt{1-\beta^2}z = \beta\sqrt{1-\sigma^2}\hat{H}_{k} + w.
\end{align}
Here, to simplify the calculation,  $\beta\sigma q + \sqrt{1-\beta^2}z$ is equivalent to a new Gaussian variable $w \sim\mathcal{CN}\left(0,(\beta\sigma)^2+1-\beta^2\right)$.  Moreover, we can follow the same approach as in \cite{Wang2007TWCperformance} to add the effect of estimation errors of $\hat{H}$ as an independent additive Gaussian variable whose variance is given by the accuracy of \gls{csi} estimation.

\section{The First-Order Marcum Q-Function and  Semi-Linear Approximation}
The first-order \footnote{To simplify the analysis, our work concentrates on the approximation of the first-order Marcum-$Q$ function. However, our approximation technique can be easily extended to the cases with different orders of the Marcum $Q$-function.} Marcum $Q$-function (\ref{eq_q1}) is observed in various problem formulations. However, it is not an easy-to-handle function with modified Bessel function, double parameters ($\alpha$ and $\beta$), and the integral shape.

In the literature,  the Marcum $Q$-function has appeared in many areas, such as statistics/signal detection \cite{helstrom1994elements}, and in performance analysis of different setups, such as temporally correlated channels \cite{Makki2013TCfeedback}, spatial correlated channels \cite{Makki2011Eurasipcapacity},  free-space optical (FSO) links \cite{Makki2018WCLwireless}, relay networks \cite{Makki2016TVTperformance}, as well as  cognitive radio and radar systems \cite{Simon2003TWCsome,Suraweera2010TVTcapacity,Kang2003JSAClargest,Chen2004TCdistribution,Ma2000JSACunified,Zhang2002TCgeneral,Ghasemi2008ICMspectrum,Digham2007TCenergy, simon2002bookdigital,Cao2016CLsolutions,sofotasios2015solutions,Cui2012ELtwo,Azari2018TCultra,Alam2014INFOCOMWrobust,Gao2018IAadmm,Shen2018TVToutage,Song2017JLTimpact,Tang2019IAan,ermolova2014laplace,peppas2013performance,jimenez2014connection}. However, in these applications, the presence of  the Marcum $Q$-function makes the mathematical analysis challenging, because it is  difficult to manipulate  with no closed-form  expressions especially when it appears in parameter optimizations and integral calculations. For this reason, several methods have been developed in \cite{Bocus2013CLapproximation,Fu2011GLOBECOMexponential,zhao2008ELtight,Simon2000TCexponential,annamalai2001WCMCcauchy,Sofotasios2010ISWCSnovel,Li2010TCnew,andras2011Mathematicageneralized,Gaur2003TVTsome,Kam2008TCcomputing,Corazza2002TITnew,Baricz2009TITnew,chiani1999ELintegral}  to bound/approximate the Marcum $Q$-function. For example, \cite{Fu2011GLOBECOMexponential,zhao2008ELtight} have proposed  modified forms of the function, while \cite{Simon2000TCexponential,annamalai2001WCMCcauchy} have derived exponential-type bounds which are good for the bit error rate analysis at high signal-to-noise ratios (SNRs). Other types of bounds are expressed by, e.g., error function \cite{Kam2008TCcomputing} and Bessel functions \cite{Corazza2002TITnew,Baricz2009TITnew,chiani1999ELintegral}. Some alternative methods have been also proposed in \cite{Sofotasios2010ISWCSnovel,Li2010TCnew,andras2011Mathematicageneralized,Bocus2013CLapproximation,Gaur2003TVTsome}. Although each of these approximation/bounding techniques are fairly tight for their considered problem formulation, they are still based on hard-to-deal functions, or have complicated summation/integration structures, which may be not easy to deal with in e.g., integral calculations and parameter optimizations. 

We present our semi-linear approximation of the \gls{cdf} in the form of $y(\alpha, \beta ) = 1-Q_1(\alpha,\beta)$. The idea of this proposed approximation is to use one point and its corresponding slope at that point to create a line  approximating the CDF. The approximation method is summarized in Lemma \ref{Lemma1} as follows.
\begin{lem}\label{Lemma1}
 The CDF of the form $y(\alpha, \beta ) = 1-Q_1(\alpha,\beta)$ can be semi-linearly approximated by $Y(\alpha,\beta)\simeq \mathcal{Z}(\alpha, \beta)$ where
\begin{align}\label{eq_lema1}
\mathcal{Z}(\alpha, \beta)=
\begin{cases}
0,  ~~~~~~~~~~~~~~~~~~~~~~~~~~~~~~~~~~~\mathrm{if}~ \beta < c_1  \\ 
 \frac{\alpha+\sqrt{\alpha^2+2}}{2} e^{-\frac{1}{2}\left(\alpha^2+\left(\frac{\alpha+\sqrt{\alpha^2+2}}{2}\right)^2\right)}\times\\
 ~~~I_0\left(\alpha\frac{\alpha+\sqrt{\alpha^2+2}}{2}\right)\times\left(\beta-\frac{\alpha+\sqrt{\alpha^2+2}}{2}\right)+\\
 ~~~1-Q_1\left(\alpha,\frac{\alpha+\sqrt{\alpha^2+2}}{2}\right),  ~~~~~~\mathrm{if}~ c_1 \leq\beta\leq c_2 \\
1,  ~~~~~~~~~~~~~~~~~~~~~~~~~~~~~~~~~~~\mathrm{if}~ \beta> c_2,
\end{cases}
\end{align}
with
\begin{align}\label{eq_c1}
  c_1(\alpha) = \max\Bigg(0,\frac{\alpha+\sqrt{\alpha^2+2}}{2}+
    \frac{Q_1\left(\alpha,\frac{\alpha+\sqrt{\alpha^2+2}}{2}\right)-1}{\frac{\alpha+\sqrt{\alpha^2+2}}{2} e^{-\frac{1}{2}\left(\alpha^2+\left(\frac{\alpha+\sqrt{\alpha^2+2}}{2}\right)^2\right)}I_0\left(\alpha\frac{\alpha+\sqrt{\alpha^2+2}}{2}\right)}\Bigg),
\end{align}
\begin{align}\label{eq_c2}
   c_2(\alpha) = \frac{\alpha+\sqrt{\alpha^2+2}}{2}+
   \frac{Q_1\left(\alpha,\frac{\alpha+\sqrt{\alpha^2+2}}{2}\right)}{\frac{\alpha+\sqrt{\alpha^2+2}}{2} e^{-\frac{1}{2}\left(\alpha^2+\left(\frac{\alpha+\sqrt{\alpha^2+2}}{2}\right)^2\right)}I_0\left(\alpha\frac{\alpha+\sqrt{\alpha^2+2}}{2}\right)}.
\end{align}
\end{lem}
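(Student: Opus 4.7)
The plan is to derive this approximation as a first-order Taylor expansion of $y(\alpha,\beta) = 1 - Q_1(\alpha,\beta)$ around a carefully chosen point $\beta_0(\alpha)$, together with a clip to $[0,1]$ on either side. First I would recall that $y(\alpha,\cdot)$ is the CDF of a Rician random variable: by differentiating under the integral in the definition $Q_1(\alpha,\beta) = \int_\beta^\infty x e^{-(x^2+\alpha^2)/2} I_0(\alpha x)\,\mathrm{d}x$, one gets
\begin{align*}
\frac{\partial y}{\partial \beta}(\alpha,\beta) = \beta\, e^{-(\alpha^2+\beta^2)/2}\, I_0(\alpha\beta) \geq 0,
\end{align*}
so $y$ is non-decreasing and varies from $0$ at $\beta=0$ to $1$ as $\beta\to\infty$. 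This legitimizes approximating $y$ by a single affine segment sandwiched between the two constant values $0$ and $1$.

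Next, I would pick the expansion point $\beta_0(\alpha) = \frac{\alpha+\sqrt{\alpha^2+2}}{2}$, i.e.\ the positive root of $2\beta^2 - 2\alpha\beta - 1 = 0$. The heuristic motivation is that $\beta_0$ lies near the mode of the Rician PDF (where the CDF is steepest and hence where a linear approximation centered there covers the widest range of CDF values before needing to be clipped). A quick justification is that the exact mode equation $I_0(\alpha\beta)(1-\beta^2)+\alpha\beta\, I_1(\alpha\beta)=0$ reduces, under standard Bessel-ratio approximations, to a quadratic of the same form as $2\beta^2-2\alpha\beta-1=0$. I would note that a different expansion point merely changes the approximation error but not the overall structure of the lemma; the specific choice is what optimizes fit for the target regime.

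Then I would construct the tangent line
\begin{align*}
L(\alpha,\beta) = y(\alpha,\beta_0) + \left.\frac{\partial y}{\partial \beta}\right|_{\beta=\beta_0}\bigl(\beta - \beta_0\bigr),
\end{align*}
and substitute the closed-form expressions for $y(\alpha,\beta_0) = 1-Q_1(\alpha,\beta_0)$ and the slope $m(\alpha) = \beta_0\, e^{-(\alpha^2+\beta_0^2)/2}\, I_0(\alpha\beta_0)$. This immediately reproduces the middle branch of \eqref{eq_lema1}. The crossing points $c_1$ and $c_2$ are then obtained by solving $L(\alpha,c_1)=0$ and $L(\alpha,c_2)=1$, which gives $c_1 = \beta_0 - (1-Q_1(\alpha,\beta_0))/m$ and $c_2 = \beta_0 + Q_1(\alpha,\beta_0)/m$, matching \eqref{eq_c1}--\eqref{eq_c2}; the outer $\max(0,\cdot)$ in $c_1$ enforces the physical constraint $\beta\geq 0$. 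Outside $[c_1,c_2]$, replacing the tangent by the constants $0$ and $1$ keeps the approximation consistent with the CDF bounds.

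The main obstacle I expect is not the algebra, which is mechanical once $\beta_0$ is fixed, but rather justifying \emph{why} $\beta_0(\alpha)=(\alpha+\sqrt{\alpha^2+2})/2$ is the right expansion point; since any point would produce a valid semi-linear form, one must argue — either analytically through the Bessel-ratio approximation of the mode equation, or empirically by a tightness check against numerical $Q_1$ values — that this choice yields a uniformly accurate approximation over the parameter ranges of interest for the PA system analysis. I would close by remarking that the lemma itself is constructive: once the expansion point is granted, the formulas for $\mathcal{Z}$, $c_1$ and $c_2$ follow from the tangent-line construction and the clipping described above.
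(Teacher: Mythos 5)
Your proposal is correct and follows essentially the same route as the paper: the thesis itself describes the construction as using ``one point and its corresponding slope at that point to create a line,'' and the reference proof in \cite{guo2020semilinear} is precisely the tangent line at $\beta_0=\frac{\alpha+\sqrt{\alpha^2+2}}{2}$ (the approximate inflection point of the CDF, obtained from the mode equation via the Bessel-ratio asymptotic $I_1(x)/I_0(x)\approx 1-\frac{1}{2x}$), clipped to $[0,1]$ at the crossing points $c_1$ and $c_2$. Your derivative computation, the identification of $\beta_0$ as the positive root of $2\beta^2-2\alpha\beta-1=0$, and the solutions of $L=0$ and $L=1$ all match the paper's expressions.
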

\begin{proof}
See  \cite[Sec. II]{guo2020semilinear}.
\end{proof}

Moreover, we can make some second level approximations considering different ranges of $\alpha$ to further simplify notations. For more details, refer to \cite{guo2020semilinear}.
\begin{figure}
    \centering
    \includegraphics[width=0.7\columnwidth]{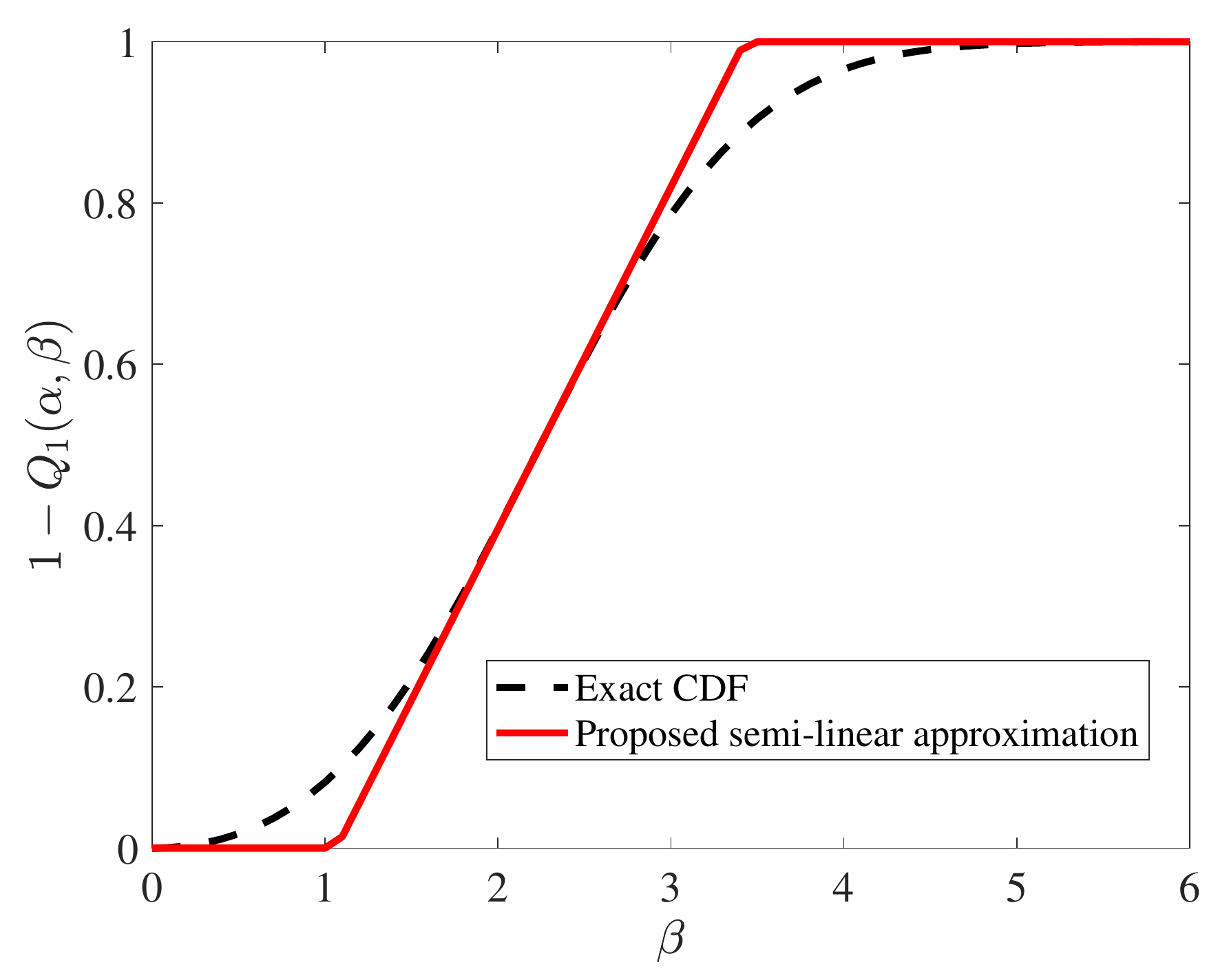}
    \caption{The illustration of proposed semi-linear approximation, $\alpha = 2$.}
    \label{fig:cdf}
\end{figure}

One example result of the proposed approximation can be seen in Fig. \ref{fig:cdf} with $\alpha$ set to 2. We can observe that Lemma 1 is tight for moderate values of $\beta$. Note that the proposed approximations are not tight at the tails of the \gls{cdf}. However, as observed in \cite{Bocus2013CLapproximation,helstrom1994elements,Makki2013TCfeedback,Makki2011Eurasipcapacity,Makki2018WCLwireless,Makki2016TVTperformance,Simon2003TWCsome,Suraweera2010TVTcapacity,Kang2003JSAClargest,Chen2004TCdistribution,Ma2000JSACunified,Zhang2002TCgeneral,Ghasemi2008ICMspectrum,Digham2007TCenergy, simon2002bookdigital,Fu2011GLOBECOMexponential,zhao2008ELtight,Simon2000TCexponential,Cao2016CLsolutions,sofotasios2015solutions,Cui2012ELtwo,annamalai2001WCMCcauchy,Sofotasios2010ISWCSnovel,Li2010TCnew,andras2011Mathematicageneralized,Gaur2003TVTsome,Kam2008TCcomputing,Corazza2002TITnew,Baricz2009TITnew,chiani1999ELintegral}, in different applications, the  Marcum $Q$-function is typically combined with other functions which tend to zero in the tails of the CDF. In such cases, the inaccuracy of the approximation at the tails does not affect the tightness of the final analysis. For example, it can simplify integrals such as
\begin{align}\label{eq_integral}
G(\alpha,\rho)=\int_\rho^\infty{e^{-nx} x^m \left(1-Q_1(\alpha,x)\right)\text{d}x} ~~\forall n,m,\alpha,\rho>0.
\end{align}
Such an integral has been observed in various applications, e.g., \cite[Eq. 1]{Simon2003TWCsome}, \cite[Eq. 2]{Cao2016CLsolutions}, \cite[Eq. 1]{sofotasios2015solutions}, \cite[Eq. 3]{Cui2012ELtwo}, and \cite[Eq. 1]{Gaur2003TVTsome}. However, depending on the values of $n, m$ and $\rho$, (\ref{eq_integral}) may have no closed-form expression.

Another example of integral calculation is
\begin{align}\label{eq_integralT}
    T(\alpha,m,a,\theta_1,\theta_2) = \int_{\theta_1}^{\theta_2} e^{-mx}\log(1+ax)Q_1(\alpha,x)\text{d}x ~~\forall m>0,a,\alpha,
\end{align}
with $\theta_2>\theta_1\geq0$, which does not have a closed-form expression for different values of $m, a, \alpha$. This integral is interesting as it is often used to analyse the expected performance of outage-limited systems, e.g,  \cite{Simon2003TWCsome,Simon2000TCexponential,Gaur2003TVTsome,6911973}.

Finally, the proposed semi-linear approximation can be used for the rate adaptation scheme developed in the \gls{pa} system. For more details,  refer to Chapter \ref{chapter:nosections} as well as \cite[Sec. II]{guo2020semilinear}.

\chapter{Resource Allocation in PA Systems}\label{chapter:nosections}
Resource allocation plays an important role in communication systems as a way of optimizing the assignment of available resources to achieve  network design objectives. In the \gls{pa} system, resource allocation can be deployed to mitigate different challenges mentioned in Chapter \ref{chapter:two}. In this chapter, we develop various resource allocation schemes for the \gls{pa} system under different practical constraints.

\section{Rate Adaptation in the Classic PA Setup}
In this section, we propose a rate adaptation scheme to mitigate the mismatch problem. Here, the classic setup means the \gls{pa} is only used for channel prediction, not for data transmission. We assume that $d_\text{a}$, $\delta $ and $\hat{g}$ are known at the BS. It can be seen from (\ref{eq_pdf}) that $f_{g|\hat{H}}(x)$ is a function of $v$. For a given $v$, the distribution of $g$ is known at the BS, and a rate adaption scheme can be performed to improve the system performance.

The data is transmitted with rate $R^*$ nats-per-channel-use (npcu). If the instantaneous realization of the channel gain supports the data rate, i.e., $\log(1+gP)\ge R^*$, the data can be successfully decoded, otherwise outage occurs.  Hence, the outage probability in each time slot is obtained as $\text{Pr}(\text{Outage}|\hat{H}) = F_{g|\hat{H}}\left(\frac{e^{R^*}-1}{P}\right)$. Considering slotted communication in block fading channels, where $\Pr(\text{Outage})>0$ varies with different fading models. Here, we define throughput as the data rate times the successful decoding probability \cite[p. 2631]{Biglieri1998TITfading}, \cite[Th. 6]{Verdu1994TITgeneral}, \cite[Eq. 9]{Makki2014TCperformance}. I.e., the expected data rate successfully received by the receiver is an appropriate performance metric. Hence, the rate adaptation problem of maximizing the  throughput in each time slot, with given $v$ and $\hat{g}$, can be expressed as
\begin{align}\label{eq_avgR}
    R_{\text{opt}|\hat{g}}=\argmax_{R^*\geq 0} \left\{ \left(1-\text{Pr}\left(\log(1+gP)<R^*\right)\right)R^* \right\}\nonumber\\
   =\argmax_{R^*\geq 0} \left\{ \left(1-\text{Pr}\left(g<\frac{e^{R^*}-1}{P}\right)\right)R^*\right \}\nonumber\\
    =\argmax_{R^*\geq 0} \left\{ \left(1-F_{g|\hat{H}}\left(\frac{e^{R^*}-1}{P}\right)\right) R^*\right\}.
\end{align}
Also, the expected throughput is obtained by $\mathbb{E}\left\{ \left(1-F_{g|\hat{H}}\left(\frac{e^{R_{\text{opt}|\hat g}}-1}{P}\right)\right) R_{\text{opt}|\hat g}\right\}$ with expectation over $\hat g$. 

Using (\ref{eq_cdf}), (\ref{eq_avgR}) is simplified as
\begin{eqnarray}\label{eq_optR}
    R_{\text{opt}|\hat{g}}=\argmax_{R^*\geq 0} \left\{ Q_1\left(  \sqrt{\frac{2\hat{g}(1-\sigma^2)}{\sigma^2}}, \sqrt{\frac{2(e^{R^*}-1)}{P\sigma^2}} \right)R^* \right\}.
\end{eqnarray}

In general, (\ref{eq_optR}) does not have a closed-form solution. For this reason, in \cite{guo2020semilinear} and \cite{Guo2019WCLrate} we propose different approximations for the optimal data rate maximizing the instantaneous throughput.

\section{Hybrid Automatic Repeat Request in the PA Systems}

\gls{harq} is a well-known approach to improve  data transmission reliability and efficiency. The main idea of \gls{harq} is to retransmit the message that experienced poor channel conditions in order to reduce the outage probability \cite{Caire2001throughput,makki2012hybrid,Makki2014TCperformance}. Here, we define that outage occurs when the transmitted message cannot be decoded at the receiver. For the Rayleigh block fading channel, infinite power is required to achieve zero outage probability for all realizations. Hence, we replace the strict outage constraint by a more realistic requirement, where a transmission is successful as long as the message can always be decoded by the receiver with probability $\epsilon$. We define $\epsilon$ as a parameter of the system outage tolerance.

The outage-constrained power allocation problem in \gls{harq} systems has been studied in, e.g, \cite{caire1999optimum} with perfect \gls{csi} assumption, in \cite{lee2010cooperative} with cooperative decode-and-forward \gls{arq} relaying under packet-rate fading channels, and in \cite{makki2014green,makki2014greenglobe} with power allocation schemes aiming at minimizing the average transmit power for \gls{harq} systems.  Also, in the  block fading scenario, \cite{wu2010performance} studied the outage-limited performance of different \gls{harq} protocols.  Moreover, the outage-constrained power optimization for the \gls{rtd} and fixed-length coding \gls{inr} \gls{harq} protocols were investigated in \cite{su2011optimal} and \cite{chaitanya2011outage}, respectively. Assuming that the channel changes in each re-transmission, \cite{seo2008optimal} developed power allocation schemes with basic \gls{arq}. Finally, \cite{djonin2008joint} proposed a linear-programming approach with a buffer cost constraint to solve the power adaptive problem in HARQ systems where the power is adapted based on the received \gls{csi}.

As the PA system includes the feedback link with the \gls{fdd} setup, i.e., from the \gls{pa} to the \gls{bs}, \gls{harq} can be supported by the \gls{pa} structure in high mobility scenarios. That is, the \gls{bs} could potentially adjust its transmit rate/power based on the feedback from the \gls{pa}. In this way, it is expected that the joint implementation of the \gls{pa} system and the \gls{harq} scheme can improve the efficiency and reliability of outage-constrained systems. There is limited work on deploying \gls{harq} in high mobility scenarios, i.e., when the channel change quickly over time compared to the feedback delay. In \cite{makki2012arq}, the authors investigated the performance of basic \gls{arq} and \gls{inr} protocols in fast-fading channels where a number of channel realizations are experienced in each retransmission round. Also, \cite{chelli2014performance}  studied the performance of \gls{inr} over double Rayleigh channels, a common model for the fading amplitude of vehicle-to-vehicle communication systems. However, both \cite{makki2012arq} and \cite{chelli2014performance} deal with the same channel \gls{pdf} for different re-transmission rounds, which has limited contribution for the spatial/temporal variation of the channel in vehicle communications.

In the classic \gls{pa} setup,  the spectrum is underutilized, and the spectral efficiency could be further improved in the case that the \gls{pa} could be used not only  for channel prediction, but also for data transmission. We address these challenges by implementing \gls{harq}-based protocols in \gls{pa} systems as follows.

\begin{figure}
    \centering
    \includegraphics[width=0.7\columnwidth]{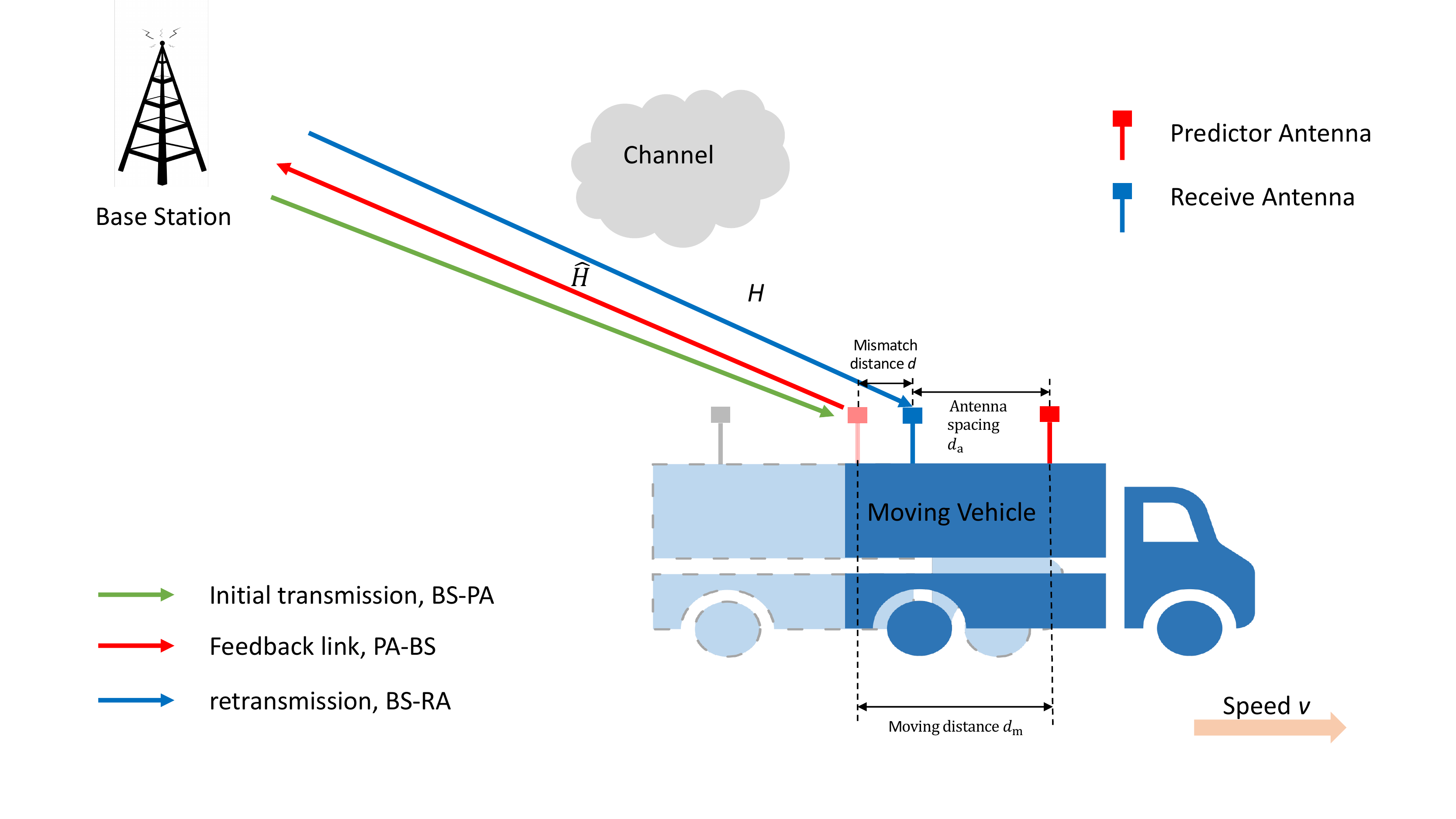}
    \caption{The PA concept with \gls{harq}.}
    \label{fig:PA_harq}
\end{figure}

As seen in Fig. \ref{fig:PA_harq}, the \gls{bs} first sends pilots as well as the encoded data with certain initial parameters, e.g., power or rate depending on the problem formulation, to the \gls{pa}. Then, the \gls{pa} estimates the channel from the received pilots. At the same time, the \gls{pa} tries to decode the signal. If the message is correctly decoded,  an \gls{ack} is fed back to the \gls{bs}, and the data transmission stops. Otherwise, the \gls{pa} sends both a \gls{nack} and high accuracy quantized \gls{csi} feedback. The number of quantization bits are large enough such that we can assume the BS to have perfect \gls{csi}. With \gls{nack}, in the second transmission round, the \gls{bs}  transmits the message to the \gls{ra} with adapted power/rate which is a function of the instantaneous channel quality in the first round. The outage occurs if the RA cannot decode the message at the end of the second round. 

In the following section, we develop an outage-constrained power allocation scheme in the \gls{harq}-based \gls{pa} system. The related study about delay-limited average rate optimization in the \gls{harq}-based \gls{pa} system can be found in \cite{guo2020rate}.

\section{ Outage-constrained Power Allocation in the HARQ-based PA Systems}
As  seen in Fig. \ref{fig:time_power}, with no \gls{csi}, at $t_1$ the \gls{bs} sends pilots as well as the encoded data with certain initial rate $R$ and power $P_1$ to the \gls{pa}. At $t_2$, the \gls{pa} estimates the channel $\hat{H}$ from the received pilots. At the same time, the \gls{pa} tries to decode the signal. If the message is correctly decoded, i.e., $R\leq \log(1+\hat{g}P_1)$,  an \gls{ack} is fed back to the \gls{bs} at $t_3$, and the data transmission stops. Otherwise, the \gls{pa} sends both a \gls{nack} and high accuracy quantized \gls{csi} feedback about $\hat{H}$.  With \gls{nack}, in the second transmission round at time $t_4$, the \gls{bs}  transmits the message to the \gls{ra} with power $P_2$,  which is a function of the instantaneous channel quality $\hat{g}$. The outage occurs if the \gls{ra} cannot decode the message at the end of the second round. 

\begin{figure}
    \centering
    \includegraphics[width=0.7\columnwidth]{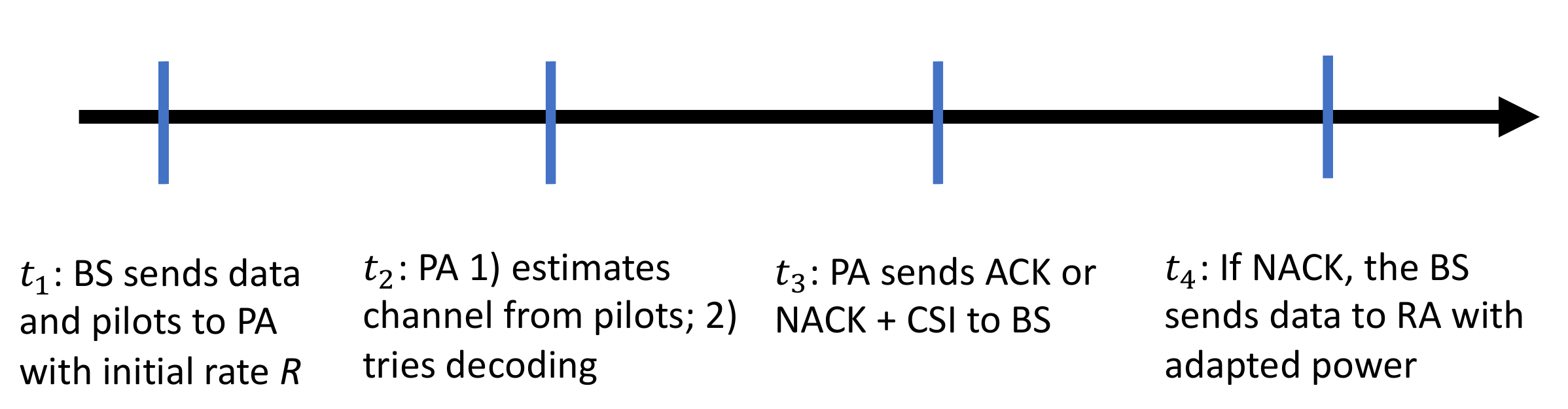}
    \caption{Time structure for the proposed outage-constrained power allocation in the \gls{harq}-based \gls{pa} system.}
    \label{fig:time_power}
\end{figure}

Let $\epsilon$ be the outage probability constraint. Here, we present the results for the cases with \gls{rtd} and \gls{inr} \gls{harq} protocols. With an \gls{rtd} protocol, the same signal (with possibly different power) is sent in each retransmission round, and the receiver performs maximum ratio combining of all received copies of the signal. With \gls{inr}, on the other hand, new redundancy bits are sent in the retransmissions, and the receiver decodes the message by combining all received signals \cite{makki2014green,chaitanya2011outage,djonin2008joint}.

Considering Rayleigh fading conditions with $f_{\hat{g}}(x) =  e^{- x}$, the outage probability at the end of Round 1 is given by
\begin{align}\label{eq_pout}
\text{Pr}(\text{Outage, Round 1}) = \text{Pr}\left\{R\leq \log(1+\hat{g}P_1)\right\} = \text{Pr}\left\{\hat{g}\leq \frac{e^{R}-1}{P_1}\right\}
 = 1-e^{ -\frac{\theta}{P_1}}, 
\end{align}
where $\theta = e^{R}-1$. Then, using the results of, e.g.,  \cite[Eq. 7, 18]{makki2014green} on the outage probability of the \gls{rtd}- and \gls{inr}-based \gls{harq} protocols, the power allocation problem for the proposed \gls{harq}-based \gls{pa} system can be stated as
\begin{equation}
\label{eq_optproblem}
\begin{aligned}
\min_{P_1,P_2} \quad & \mathbb{E}_{\hat{g}}\left[P_\text{tot}|\hat{g}\right] \\
\textrm{s.t.} \quad &  P_1, P_2 > 0,\\
&P_\text{tot}|\hat{g} = \left[P_1 + P_2(\hat{g}) \times \mathcal{I}\left\{\hat{g} \le \frac{\theta}{P_1}\right\}\right],
\end{aligned}
\end{equation}
with
\begin{align}\label{eq_optproblemrtd}
F_{g|\hat{g}}\left\{\frac{\theta-\hat{g}P_1}{P_2(\hat{g})}    \right\} = \epsilon, \quad\text{for RTD}
\end{align}
\begin{align}\label{eq_optprobleminr}
F_{g|\hat{g}}\left\{\frac{e^{R-\log(1+\hat{g}P_1)}-1}{P_2(\hat{g})}    \right\} = \epsilon, \quad\text{for INR}. 
\end{align}
Here, $P_\text{tot}|\hat{g}$ is the total instantaneous transmission power for two transmission rounds (i.e., one retransmission) with given $\hat{g}$, and we define $\Bar{P} \doteq \mathbb{E}_{\hat{g}}\left[P_\text{tot}|\hat{g}\right]$ as the expected power, averaged over $\hat{g}$. Moreover, $\mathcal{I}(x)=1$ if $x>0$ and $\mathcal{I}(x)=0$ if $x \le 0$. Also, $\mathbb{E}_{\hat{g}}[\cdot]$ represents the expectation operator over $\hat{g}$. Here, we ignore the peak power constraint and assume that the \gls{bs} is capable for  sufficiently high transmission powers. Finally, (\ref{eq_optproblem})-(\ref{eq_optprobleminr}) come from the fact that, with our proposed scheme, $P_1$ is fixed and optimized with no \gls{csi} at the \gls{bs} and based on average system performance. On the other hand, $P_2$ is adapted continuously based on the instantaneous \gls{csi}.

Using (\ref{eq_optproblem}), the required power in Round 2 is given by
\begin{equation}
\label{eq_PRTDe}
    P_2(\hat{g}) = \frac{\theta-\hat{g}P_1}{F_{g|\hat{g}}^{-1}(\epsilon)},
\end{equation}
for the \gls{rtd}, and
\begin{equation}
\label{eq_PINRe}
    P_2(\hat{g}) = \frac{e^{R-\log(1+\hat{g}P_1)}-1}{F_{g|\hat{g}}^{-1}(\epsilon)},
\end{equation}
for the \gls{inr}, where $F_{g|\hat{g}}^{-1}(\cdot)$ is the inverse of the \gls{cdf} given in (\ref{eq_cdf}). Note that $F_{g|\hat{g}}^{-1}(\cdot)$ is a complicated function of $\hat{g}$,  and consequently, it is not possible to express $P_2$ in closed-form. For this reason, one can use   \cite[Eq. 2, 7]{6414576}
\begin{align}
    Q_1 (s, \rho) &\simeq e^{\left(-e^{\mathcal{I}(s)}\rho^{\mathcal{J}(s)}\right)}, \nonumber\\
    \mathcal{I}(s)& = -0.840+0.327s-0.740s^2+0.083s^3-0.004s^4,\nonumber\\
    \mathcal{J}(s)& = 2.174-0.592s+0.593s^2-0.092s^3+0.005s^4,
\end{align}
to approximate $F_{g|\hat{g}}$ and consequently $F_{g|\hat{g}}^{-1}(\epsilon)$. In this way, (\ref{eq_PRTDe}) and (\ref{eq_PINRe}) can be approximated as
\begin{equation}
\label{eq_PRTDa}
    P_2(\hat{g}) = \Omega\left(\theta-\hat{g}P_1\right),
\end{equation}
for the RTD, and
\begin{equation}
\label{eq_PINRa}
   P_2(\hat{g}) = \Omega\left(e^{R-\log(1+\hat{g}P_1)}-1\right),
\end{equation}
for the INR, where
\begin{equation}\label{eq_omega}
    \Omega (\hat{g}) = \frac{2}{\sigma^2}\left(\frac{\log(1-\epsilon)}{-e^{\mathcal{I}\left(\sqrt{\frac{2\hat{g}(1-\sigma^2)}{\sigma^2}}\right)}}\right)^{-\frac{2}{\mathcal{J}\left(\sqrt{\frac{2\hat{g}(1-\sigma^2)}{\sigma^2}}\right)}}.
\end{equation}

In this way, for different \gls{harq} protocols, we can express the  instantaneous transmission power of Round 2, for every given $\hat{g}$ in closed-form. Then, the  power allocation problem (\ref{eq_optproblem}) can be solved numerically. However, (\ref{eq_omega}) is still complicated and it is not possible to solve (\ref{eq_optproblem}) in closed-form. For this reason, we propose an approximation scheme to solve (\ref{eq_optproblem}) as follows.

Let us initially concentrate on the \gls{rtd} protocol. Then,  combining (\ref{eq_optproblem}) and (\ref{eq_PRTDe}), the expected total transmission power  is given by
\begin{align}\label{eq_barP}
    \Bar{P}_\text{RTD} = P_1 + \int_0^{\theta/P_1} e^{- x}P_2\text{d}x
    = P_1 + \int_0^{\theta/P_1} e^{- x}\frac{\theta-x P_1}{F_{g|x}^{-1}(\epsilon)}\text{d}x.
\end{align}

Then, Theorem 1 in \cite{guo2020power} derives the minimum required power  in Round 1 and the average total power consumption.

To study the performance of the \gls{inr}, we can use Jensen's inequality and the concavity of the logarithm function \cite[Eq. 30]{makki2016TWCRFFSO}
\begin{align}\label{eq_jensen}
   \frac{1}{n}\sum_{i=1}^{n} \log (1+x_i)\leq\log\left(1+\frac{1}{n}\sum_{i=1}^{n}x_i\right),
\end{align}
and derive the closed-form expressions for the minimum required power following the similar steps as for the \gls{rtd} (see \cite[Sec. III B]{guo2020power}) for detailed discussions).

\chapter{Contributions and Future Work}\label{sec:conclusion}
 This chapter summarizes the contributions of each appended publication and lays out possible directions for future work based on the topics in this thesis.
\section{Paper A}
\subsection*{"Rate adaptation in predictor antenna systems"}
In this paper, we study the performance of \gls{pa} systems in the presence of the mismatch problem with rate adaptation. We derive closed-form expressions for the instantaneous throughput, the outage probability, and the throughput-optimized rate adaptation. Also, we take the temporal variation of the channel into account and evaluate the system performance in various conditions. The simulation and the analytical results show that, while \gls{pa}-assisted adaptive rate adaptation leads to considerable performance improvement, the throughput and the outage probability are remarkably affected by the spatial mismatch and temporal correlations.

\section{Paper B}
\subsection*{"A semi-linear approximation of the first-order Marcum $Q$-function with application to predictor antenna systems"}
In this paper, we first present a semi-linear approximation of the Marcum $Q$-function. Our proposed approximation is useful because it simplifies, e.g., various integral calculations including the Marcum $Q$-function as well as various operations such as parameter optimization. Then, as an example of interest, we apply our proposed approximation approach to the performance analysis of \gls{pa} systems. Considering spatial mismatch due to  mobility, we derive closed-form expressions for the instantaneous and average throughput as well as the throughput-optimized rate allocation. As we show, our proposed approximation scheme enables us to analyze \gls{pa} systems with high accuracy. Moreover, our results show that rate adaptation can improve the performance of \gls{pa} systems with different levels of spatial mismatch.

\section{Paper C}
\subsection*{"Power allocation in HARQ-based predictor antenna systems"}
In this work, we study the performance of \gls{pa} systems using \gls{harq}.  Considering spatial mismatch due to the vehicle mobility, we derive closed-form expressions for the optimal power allocation and the minimum average power of the \gls{pa} systems under different outage probability constraints. The results are presented for different types of \gls{harq} protocols, and we study the effect of different parameters on the performance of \gls{pa} systems. As we show, our proposed approximation scheme enables us to analyze \gls{pa} systems with high accuracy. Moreover, for different vehicle speeds, we show that the \gls{harq}-based feedback can reduce the outage-limited transmit power consumption of PA systems by orders of magnitude.


\section{Related Contributions}
Another \gls{csi}-related application in vehicle communication is beamforming. As discussed in Chapter \ref{ch:introduction}, the channel for vehicles changes rapidly, such that it is hard to acquire \gls{csit}, especially during initial access. In Paper D, we study the performance of large-but-finite \gls{mimo}  networks using codebook-based beamforming. Results show that the proposed genetic algorithm-based scheme can reach (almost) the same performance as in the exhaustive search-based scheme with considerably lower implementation complexity. Then, in Paper E, we extend our study in Paper D to include beamforming at both the transmitter and the receiver side. Also, we compare different machine learning-based analog beamforming approaches for the beam refinement. As indicated in the  results, our scheme outperforms the considered state-of-the-art schemes in terms of throughput. Moreover, when taking the user mobility into account, the proposed approach can remarkably reduce the algorithm running delay based on the beamforming results in the previous time slots. Finally, in Paper F, with collaborative users, we show that the end-to-end throughput can be improved by  data exchange through device-to-device links among the users.

\section{Future work}
In this thesis, we have developed analytical models for evaluating the \gls{pa} system from  different perspectives, and proposed resource allocation schemes to mitigate the mismatch problem and improve the system performance. Here are some potential directions for future work:
\begin{itemize}
    \item Several results presented in the papers above rely on the assumption that the scattering environment around the \glspl{ra} is isotropic and remains constant over the time period of interest in a small moving distance.  To more accurately resemble reality, one could consider alternative models to evaluate the \gls{pa} system, for example some mixture models with more time-varying properties \cite{abdi2003new}.
    \item As a natural follow-up from above, one can consider more use cases for the \gls{pa} system, such as satellite-train communication and vehicle localization, with different channel models and service requirements. Here, the results of  \cite{shim2016traffic,lannoo2007radio,wang2012distributed,Dat2015,Laiyemo2017,yutao2013moving,Marzuki2017,Andreev2019,Haider2016,Patra2017} can be supportive. 
    \item The work we have done considers \gls{siso} setup, i.e., with one antenna at the \gls{bs} and one \gls{ra} on the top of the vehicle at the receiver side. Though in \cite{guo2020power}  we exploit the \gls{pa} as part of the data transmission, it is still interesting to see where  the gain of deploying the \gls{pa} system comes from and when we should apply it over typical transceiver schemes. Moreover, one can deploy the \gls{pa}  in multiple antenna systems for which the results of  \cite{Dinh2013ICCVEadaptive,DT2015ITSMmaking,phan2018WSAadaptive} can be useful. It is expected that combining the \gls{pa} with \gls{mimo} would result in higher performance gain in fast moving scenarios. 
    \item As we discussed in Chapter 1, in \gls{pa} systems we target at \gls{urllc}, i.e., delay/latency is crucial in the system design. Hence, there is a natural extension to perform finite blocklength analysis in the (\gls{harq}-based) \gls{pa} system. As opposed to the literature on finite blocklength studies, e.g., \cite{makki2014greenglobe,Makki2014WCLfinite,Yang2014TITquasi}, here the channel in the retransmission round(s) is different from the one in the initial transmission due to mobility.
    \item Machine learning-based channel estimation/prediction has become powerful in various applications where the statistical model of the channel does not exist or is not robust \cite{ye2017power,wen2018deep,feng2020deep}. On the other hand, the \gls{pa} itself provides a reliable feedback loop at the cost of additional resources. Using the \gls{pa} setup to perform machine learning-based channel prediction would be a very valuable contribution.
\end{itemize}

\printbibliography[title=References]
\end{refsection}

\part{Papers}
\setcounter{secnumdepth}{2}

\initpapers
\renewcommand\bibname{References}

\automark{chapter}
\renewcommand\sectionmark[1]{\markright{\MakeMarkcase {\thesection\hskip .5em\relax#1}}}
\rohead{\rightmark}
\lehead{Paper \Alph{paper}}
\rofoot{\pagemark}
\lefoot{\pagemark}





\end{document}